% ****** Start of file apssamp.tex ******
%
%   This file is part of the APS files in the REVTeX 4.2 distribution.
%   Version 4.2a of REVTeX, December 2014
%
%   Copyright (c) 2014 The American Physical Society.
%
%   See the REVTeX 4 README file for restrictions and more information.
%
% TeX'ing this file requires that you have AMS-LaTeX 2.0 installed
% as well as the rest of the prerequisites for REVTeX 4.2
%
% See the REVTeX 4 README file
% It also requires running BibTeX. The commands are as follows:
%
%  1)  latex apssamp.tex
%  2)  bibtex apssamp
%  3)  latex apssamp.tex
%  4)  latex apssamp.tex
%
\documentclass[%
 reprint,
%superscriptaddress,
%groupedaddress,
%unsortedaddress,
%runinaddress,
%frontmatterverbose,
%preprint,
%preprintnumbers,
%nofootinbib,
%nobibnotes,
%bibnotes,
 amsmath,amssymb,
 aps,
%pra,
%prb,
%rmp,
%prstab,
%prstper,
%floatfix,
]{revtex4-2}

\usepackage{amsthm}
\usepackage{graphicx}% Include figure files
\usepackage{subcaption}
\usepackage{dcolumn}% Align table columns on decimal point
\usepackage{bm}% bold math
\usepackage{xcolor}
\usepackage[bottom]{footmisc}
%\usepackage{hyperref}% add hypertext capabilities
%\usepackage[mathlines]{lineno}% Enable numbering of text and display math
%\linenumbers\relax % Commence numbering lines
% 导言区
\usepackage{fancyhdr}

% 设置页脚样式
\fancyfoot[C]{\thepage} % 页码居中
\fancyfoot[R]{\footnotesize 这是一个页脚注记。} % 页脚注记（文字）
 % 可选：添加页脚分隔线

\newtheorem{definition}{Definition}
\newtheorem{example}{Example}
\newtheorem{theorem}{Theorem}
\newtheorem{lemma}{Lemma}
\newtheorem{proposition}{Proposition}
\newtheorem{remark}{Remark}
\newtheorem{corollary}{Corollary}

%\usepackage[showframe,%Uncomment any one of the following lines to test
%%scale=0.7, marginratio={1:1, 2:3}, ignoreall,% default settings
%%text={7in,10in},centering,
%%margin=1.5in,
%%total={6.5in,8.75in}, top=1.2in, left=0.9in, includefoot,
%%height=10in,a5paper,hmargin={3cm,0.8in},
%]{geometry}

\begin{document}

\preprint{APS/123-QED}

\title{Characterizing High Schmidt Number Witnesses in Arbitrary Dimensions System}% Force line breaks with \\
%\thanks{A footnote to the article title}%

\author{Liang Xiong}
 \altaffiliation{liang.xiong@polyu.edu.hk}%Lines break automatically or can be forced with \\
\author{Nung-sing Sze}%
 \email{raymond.sze@polyu.edu.hk (Corresponding author)}
\affiliation{Department of Applied Mathematics, The Hong Kong Polytechnic University, Hung Hom, Hong Kong, China
 %This line break forced with \textbackslash\textbackslash
}%

%\author{Liang Xiong$^{1,2,}$}
%\email{liang.xiong@polyu.edu.hk}
%
%\author{Nung-sing Sze$^{2,}$}%
% \email{raymond.sze@polyu.edu.hk (Corresponding author)}
% \affiliation{$^{1}$School of Computer Science and Technology, Dongguan University of Technology, Dongguan, China\\
% $^{2}$Department of Applied Mathematics, The Hong Kong Polytechnic University, Hung Hom, Hong Kong, China}
% %\affiliation{Department of Applied Mathematics, The Hong Kong Polytechnic University, Hung Hom, Hong Kong, China}
% %Lines break automatically or can be forced with \\

%\affiliation{Department of Applied Mathematics, The Hong Kong Polytechnic University, Hung Hom, Hong Kong, China
 %This line break forced with \textbackslash\textbackslash
%}%

%\collaboration{MUSO Collaboration}%\noaffiliation

%\author{Charlie Author}
% \homepage{http://www.Second.institution.edu/~Charlie.Author}
%\affiliation{
% Second institution and/or address\\
% This line break forced% with \\
%}%
%\affiliation{
% Third institution, the second for Charlie Author
%}%
%\author{Delta Author}
%\affiliation{%
% Authors' institution and/or address\\
% This line break forced with \textbackslash\textbackslash
%}%
%
%\collaboration{CLEO Collaboration}%\noaffiliation

\date{\today}% It is always \today, today,
             %  but any date may be explicitly specified

\begin{abstract}
A profound comprehension of quantum entanglement is crucial for the progression of quantum technologies. The degree of entanglement can be assessed by enumerating the entangled degrees of freedom, leading to the determination of a parameter known as the Schmidt number. In this paper, we develop an efficient analytical tool for characterizing high Schmidt number witnesses for bipartite quantum states in arbitrary dimensions.
Our methods not only offer viable mathematical methods for constructing high-dimensional Schmidt number witnesses in theory but also simplify the quantification of entanglement and dimensionality.
Most notably, we develop high-dimensional Schmidt number witnesses within arbitrary-dimensional systems, with our Schmidt witness coefficients relying solely on the operator Schmidt coefficient. Subsequently, we demonstrate our theoretical advancements and computational superiority by constructing Schmidt number witnesses in arbitrary dimensional bipartite quantum systems with Schmidt numbers four and five.
\begin{description}
\item[Keyword]
Schmidt number; Operator Schmidt decomposition; nonnegative matrices; entanglement witnesses
%\item[Structure]
%You may use the \texttt{description} environment to structure your abstract;
%use the optional argument of the \verb+\item+ command to give the category of each item.
\end{description}
\end{abstract}

%\keywords{Suggested keywords}%Use showkeys class option if keyword
                              %display desired
\maketitle
%\begin{keyword}
%Absolutely separability; eigenvalue; positive semidefiniteness; quantum entanglement
%\end{keyword}

%\tableofcontents

\section{Introduction}

%Quantum entanglement is a phenomenon that lies at the very core of quantum mechanics, with profound implications for our understanding of the physical universe. This concept was first brought to light by Albert Einstein, Boris Podolsky, and Nathan Rosen in their seminal 1935 paper, which introduced the EPR paradox \cite{einstein1935can}. They highlighted a peculiar aspect of quantum mechanics wherein two or more particles can become entangled, such that the state of one particle instantaneously influences the state of the other, irrespective of the distance separating them. This "spooky action at a distance," as Einstein famously termed it, appeared to challenge the classical notions of locality and realism. However, subsequent theoretical advancements and experimental validations, most notably John Bell's theorem \cite{bell1964einstein} and the experiments that ensued \cite{aspect1982experimental}, have unequivocally confirmed that entanglement is a genuine and intrinsic feature of the quantum realm, not merely a theoretical oddity.

In essence, quantum entanglement \cite{einstein1935can,bell1964einstein,aspect1982experimental,RevModPhys.81.865} occurs when particles become so intricately linked that their quantum states cannot be described independently of each other. For a pair of entangled particles, the measurement of a property (such as spin, position, or polarization) of one particle instantaneously determines the corresponding property of the other particle, even if they are separated by vast distances. This interconnectedness is quantified using the concept of non-separability in the wave function of the system. Entanglement has far-reaching implications and applications, including quantum computing \cite{PhysRevLett.90.067901,graham2022multi}, where it enables quantum bits (qubits) to perform complex computations exponentially faster than classical bits \cite{nielsen2002quantum}, and quantum cryptography \cite{PhysRevLett.96.010401,PRXQuantum.4.040305}, which promises theoretically unbreakable encryption \cite{gisin2002quantum}. Moreover, entanglement is a key resource in quantum teleportation, allowing the transfer of quantum information between distant locations without physically transmitting the particles themselves \cite{bennett1993teleporting}. Thus, quantum entanglement not only challenges our classical intuitions but also opens up new frontiers in technology and information science.

Having established the concept of entanglement, we aim to explore its detection and measurement further.
There are some entanglement criteria, such as the Schmidt decomposition \cite{miszczak2011singular}, the positive partial transposition (PPT) criterion \cite{PPT1,PPT2}, entanglement Witnesses \cite{biEW2009,MEW2013,NEW2018,norm1,Sarbicki_2008,PhysRevLett.122.120501} and the computable cross norm or realignment (CCNR) criterion \cite{Li_2011,CCNR}. It should be noted, however, that numerous other criteria exist for detecting entanglement.

%The Schmidt decomposition \cite{nielsen2002quantum,miszczak2011singular} is a fundamental concept in quantum information theory, particularly in the study of quantum entanglement. %It provides a way to express a pure quantum state of a composite system, typically consisting of two subsystems, in a simplified and insightful form.The Schmidt coefficients provide a measure of the entanglement between the two subsystems; if there is only one non-zero Schmidt coefficient, the state is separable (i.e., not entangled), whereas multiple non-zero Schmidt coefficients indicate entanglement.
%This decomposition is particularly useful because it reduces the complexity of analyzing entangled states and provides a clear quantitative measure of entanglement.

In recent years, high-dimensional entanglement has become feasible \cite{dada2011experimental,malik2016multi,erhard2020advances}, showing enhanced noise resilience compared to lower dimensions \cite{PhysRevA.71.044305,lanyon2009simplifying,PhysRevA.88.032309,mirhosseini2015high}. It is crucial to discern the successful establishment of high-dimensional entanglement from results that could be explained by low-dimensional entanglement. The challenging estimation of entanglement across various degrees of freedom involves calculating the Schmidt number \cite{PhysRevA.61.040301,PhysRevLett.132.220203}, often done through constructing a Schmidt number witness \cite{PhysRevA.63.050301}, examples of which are documented in existing literature \cite{PRA23NW,PRL24OSD,PRXQuantum.4.020324,PhysRevLett.111.030501,PhysRevX.4.011011,bavaresco2018measurements}.

The Schmidt number serves as a measure of the degree of entanglement between two subsystems in a composite quantum system. A higher Schmidt number indicates a higher degree of entanglement,
while a Schmidt number of unity implies separability with no entanglement. For mixed states, the Schmidt number is a critical metric that defines the minimal Schmidt rank of the pure states required to construct the mixed state. By analyzing the Schmidt number and associated coefficients, researchers can gain insights into the entanglement properties, structure, and
characteristics of the quantum system under consideration.

%The Schmidt number of a mixed state is a crucial metric that defines the minimal Schmidt rank of the pure states necessary for its construction. The authors \cite{PhysRevA.63.050301} introduce a canonical form of these witnesses, offering a structured framework for their implementation and developing methodologies for their optimization. Typically, the construction of these witnesses leverages the observation that quantum states with high fidelity to an entangled target state are indeed entangled. The authors \cite{PRA23NW} present an iterative algorithm to identify Schmidt number witnesses, enhancing efficiency by ensuring that the required measurements scale linearly with the system's local dimension.

The work in \cite{PhysRevA.63.050301} introduces a canonical form for Schmidt number witnesses, providing a structured framework for their implementation and optimization. These witnesses are typically constructed by observing that quantum states with high fidelity to a specific entangled target state are themselves entangled. Furthermore, the authors in \cite{PRA23NW} propose an iterative algorithm to identify Schmidt number witnesses. This method improves efficiency by ensuring that the number of required measurements scales linearly with the local dimension of the system.

The Schmidt decomposition \cite{nielsen2002quantum,miszczak2011singular} plays a central role in the characterization of pure state entanglement, in that any measure of entanglement on pure states can be written as a function of the Schmidt coefficients. This decomposition is particularly useful because it reduces the complexity of analyzing entangled states and provides a clear quantitative measure of entanglement.
The Operator Schmidt decomposition is a suitable approach to extend the notion of Schmidt coefficients to bipartite mixed states. For Schmidt number three in an arbitrary dimensional bipartite system, the authors in \cite{PRL24OSD} introduce a method to construct Schmidt number witnesses based on the Operator Schmidt decomposition. However, the construction of Schmidt number witnesses for higher than four is still an open problem in the field.

%The Schmidt decomposition [3, 4] is central to understanding pure state entanglement. It simplifies the analysis of entangled states by expressing any measure of entanglement as a function of the Schmidt coefficients. For mixed states, the Operator Schmidt decomposition extends this concept by generalizing the Schmidt coefficients to bipartite mixed states. Building on this framework, the authors in [5] present a method to construct Schmidt number witnesses for systems with a Schmidt number of three. However, developing Schmidt number witnesses for values greater than four remains an unresolved challenge, leaving a significant open question in the field of quantum entanglement.

In this Letter, we delve into characterizing Schmidt number witnesses for arbitrary dimensional bipartite quantum states with Schmidt number higher than four. Through the Operator Schmidt decomposition, we can evaluate the entanglement level of bipartite mixed quantum states based on their operator Schmidt coefficients. Subsequently,  drawing on Operator Schmidt decomposition and the theory of nonnegative matrices, we establish several Schmidt number witnesses for arbitrary dimensional bipartite quantum states with higher Schmidt number than four. Our methods not only offer viable mathematical methods for constructing high-dimensional Schmidt number witnesses in theory, but also simplify the quantification of entanglement and dimensionality.
%These findings not only offer viable mathematical methods for constructing high-dimensional Schmidt number witnesses in theory but also simplify the quantification of entanglement and dimensionality.
Consequently, we demonstrate our theoretical advancements and computational superiority by constructing Schmidt number witnesses in four- and five-dimensional systems.

\section{Preliminaries}

In this paper, we denote the set of positive integers, complex number and real number by the symbol $\mathbb{N}_{+}$, $\mathbb{C}$, $\mathbb{R}$, the set of $n$-dimensional complex (real) vectors by $\mathbb{C}^{n}$ ($\mathbb{R}^{n}$), the set of complex (real) matrices by $M_m(\mathbb{C})$ ($M_m(\mathbb{R})$), and by $S_n$ the group of permutations of the set $\{1, \ldots, n\}$. An $n$-dimensional square matrix $A$ is called positive semidefinite (PSD) if $x^{\dagger}Ax\geq 0$ for any nonzero vector $x\in \mathbb{R}^{n}$.
In the context of finite-dimensional complex Hilbert spaces, $\mathcal{H}_n$ represents such an $n$-dimensional Hilbert space, while $\mathcal{L}\left(\mathcal{H}_n\right)$ denotes the collection of linear operators acting on $\mathcal{H}_n$.
\subsection{Entanglement and witnesses }

In the field of quantum information theory, a pure quantum state \(|v\rangle\in \mathbb{C}^n\) is a unit vector, while a mixed state \(\rho\in M_n(\mathbb{C})\) is a positive semidefinite (PSD) Hermitian matrix with \(\operatorname{Tr}(\rho)=1\). A mixed state \(\rho \in M_m(\mathbb{C}) \otimes M_n(\mathbb{C})\) is called separable if there exist pure states \(\{|w_{i}\rangle \}_i\subseteq \mathbb{C}^m\) and \(\{|v_{i}\rangle \}_i\subseteq \mathbb{C}^n\) such that \(\rho=\sum\limits_{i}p_{i}|w_{i}\rangle \langle w_{i}| \otimes |v_{i}\rangle \langle v_{i}|\), with \(p_{i}\geq 0\), \(\sum p_{i}=1\).  If \(\rho\) is not separable, it is called an entangled state.

A bipartite state $\rho$ can be expressed in an operator basis as $\rho=\sum_{i j k l} \rho_{i j k l}|i\rangle\langle j|\otimes| k\rangle\langle l|$. The partial transpose with respect to the second subsystem is defined as
$ \rho^{\Gamma}=\sum_{i j k l} \rho_{i j k l}|i\rangle\langle j|\otimes(|k\rangle\langle l|)^{\top}=\sum_{i j k l} \rho_{i j k l}| i\rangle\langle j|\otimes| l\rangle\langle k|. $
The PPT criterion \cite{PPT1,PPT2} is a necessary condition for separability in bipartite systems.  If \(\rho^{\Gamma}\) has any negative eigenvalues, the state is entangled.

%The Computable Cross Norm or Realignment (CCNR) criterion \cite{Li_2011,CCNR} is used to detect entanglement by examining a matrix's cross norms. Given a density matrix \(\rho\), the criterion involves realigning \(\rho\) and computing the trace norm of this realigned matrix. If the state $\rho_{A B} \in \mathcal{S}_{A B}$ is separable,  the trace norm $\left\|\left(\rho_{A B}\right)^R\right\|_1 \leq 1$.
%This method provides a straightforward computational approach to detect certain entangled states.

For any pure state \(|\psi\rangle\) in a composite system \(A \otimes B\), it can be written as
$
|\psi\rangle = \sum_{i} \sqrt{\lambda_i} \, |u_i\rangle_A \otimes |v_i\rangle_B,
$
where \(\{\lambda_i\}\) are the Schmidt coefficients, and \(|u_i\rangle\) and \(|v_i\rangle\) are orthonormal sets. The number of non-zero Schmidt coefficients, known as the Schmidt rank (SR), provides a measure of entanglement. To extend this metric to mixed states $\rho$, the convex roof method for the Schmidt rank, known as the Schmidt number denoted by $SN(\rho)$, is utilized:
\begin{equation*}
  SN(\rho)=\min\limits_{\rho=\sum_ip_i|\psi_i\rangle\langle\psi_i|}\max SR(|\psi_i\rangle),
\end{equation*}
meaning it corresponds to the largest Schmidt rank among all composite decompositions of $\rho$, minimized across all possible decompositions. The minimization process often makes calculating the Schmidt number for a quantum state challenging. The maximally entangled state $|\phi\rangle=\frac{1}{\sqrt{n}}\sum\limits_{i=0}^{n-1}|ii\rangle$ has a Schmidt number $n$, but not $n-1$, and it can be shown that \cite{PhysRevA.61.040301,PhysRevA.59.4206}
\begin{equation}\label{d/k}
  \langle \phi^{+}|\rho_{k} |\phi^{+}\rangle \leq \frac{k}{n},\quad \forall \rho_{k} \in S_{k}.
\end{equation}

An entanglement witness is an observable that serves to distinguish between entangled and separable states. Given an entangled state $\rho$, there is an operator $W$ such that
%\begin{equation*}
  $\text{Tr}(W \sigma) \geq 0 $
%\end{equation*}
holds for all separable states $\sigma$, while $\text{Tr}(W \rho) < 0$. %One of the primary methodologies involves employing witnesses that rely on the fidelity with a specific pure target state.
These witnesses take the form:
\begin{equation}\label{fidelitywitness}
  \mathcal{W}_{\mathrm{F}}=\alpha \mathbb{I}-|\Psi\rangle \langle\Psi|,
\end{equation}
where $|\Psi\rangle$ represents a particular pure entangled target state. This witness indicates that states exhibiting a substantial fidelity with $|\Psi\rangle$, specifically those with $F_{\Psi}=\langle \Psi|\rho|\Psi\rangle >\alpha$, are entangled. The development of effective entanglement witnesses plays a pivotal role in the practical verification of entanglement in experimental setups.

Density matrices can undergo decomposition utilizing the Schmidt decomposition approach. Prior to embarking on this process, it is essential to introduce a significant definition that will be consistently referenced throughout the paper.
\begin{definition}{\rm\cite{aniello2009relation}}
Let $\rho_{A B} \in \mathcal{S}_{A B}$ be a bipartite state, and let $\left\{C_i\right\}$ and $\left\{D_i\right\}$ be HS orthonormal bases of $\mathcal{L}\left(H_A\right)$ and $\mathcal{L}\left(H_B\right)$, respectively. We define the correlation matrix $\mathcal{C}\left(\rho_{A B}\right)$ of $\rho_{A B}$ as the matrix with entries
\begin{equation}\label{Correlationmatrix}
  \mathcal{C}\left(\rho_{A B}\right)_{k l}=\operatorname{Tr}\left(C_k^{\dagger} \otimes D_l^{\dagger} \rho_{A B}\right)
\end{equation}
for $k=1, \ldots, d_A^2$ and $l=1, \ldots, d_B^2$.
\end{definition}

It is worth noticing that a change in the choice of the orthonormal bases $\left\{C_i\right\}$ and $\left\{D_i\right\}$ induces only a unitary change in the correlation matrix. This implies that the value of unitarily invariant functionals of the correlation matrix will not be affected by the choice of the local bases.

Operator Schmidt decomposition: Any bipartite state $\rho_{A B} \in \mathcal{S}_{A B}$ admits a decomposition of the form
\begin{equation}\label{OSDEW}
\rho_{A B}=\sum_{i=1}^{m} \mu_i A_i \otimes B_i
\end{equation}
where $\left\{A_i\right\}_{i=1}^{d_A^2}$ and $\left\{B_i\right\}_{i=1}^{d_B^2}$ form ( $\rho_{A B}$-dependent) bases of $\mathcal{L}\left(\mathcal{H}_A\right)$ and $\mathcal{L}\left(\mathcal{H}_B\right)$, respectively, orthonormal with respect to the $H S$ inner product. Moreover, the $\mu_i\geq 0$ are the decreasingly ordered Operator Schmidt coefficients (OSC) and $m \leq \min \left\{d_A^2, d_B^2\right\}$.

\begin{theorem}{\rm\cite{PRL24OSD}}\label{BiOSDwit}
(Bipartite OSD witnesses): Let $X$ be an operator with its OSD as in Eq. \eqref{OSDEW}. Then
\begin{equation}\label{OSDEW1}
  \mathcal{W}_{OSD}= \mu_{1}\mathbb{I}-X,
\end{equation}
is an entanglement witness for bipartite entanglement.
\end{theorem}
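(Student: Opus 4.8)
The plan is to verify directly the defining inequality of a witness, namely $\operatorname{Tr}(\mathcal{W}_{OSD}\,\sigma)\ge 0$ for every separable state $\sigma$. Since $\operatorname{Tr}(\sigma)=1$, by \eqref{OSDEW1} this is equivalent to proving the uniform bound $\operatorname{Tr}(X\sigma)\le \mu_1$ over all separable $\sigma$. By linearity of the trace and the fact that every separable state is a convex combination of pure product projectors, it suffices to prove $\operatorname{Tr}\!\big(X\,(|w\rangle\langle w|\otimes|v\rangle\langle v|)\big)\le \mu_1$ for arbitrary unit vectors $|w\rangle\in\mathbb{C}^{d_A}$ and $|v\rangle\in\mathbb{C}^{d_B}$.

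First I would insert the decomposition \eqref{OSDEW} and use the product form of the test state to obtain
\begin{equation*}
\operatorname{Tr}\!\big(X\,(|w\rangle\langle w|\otimes|v\rangle\langle v|)\big)=\sum_{i=1}^{m}\mu_i\,a_i\,b_i,\qquad a_i:=\langle w|A_i|w\rangle,\ \ b_i:=\langle v|B_i|v\rangle .
\end{equation*}
The crucial step is to bound the coefficient vectors $a=(a_i)_i$ and $b=(b_i)_i$. Expanding the rank-one projector $|w\rangle\langle w|$ in the Hilbert--Schmidt orthonormal basis $\{A_i\}_{i=1}^{d_A^2}$ and applying Parseval's identity gives $\sum_{i=1}^{d_A^2}|a_i|^2=\big\||w\rangle\langle w|\big\|_{HS}^2=1$, using $\operatorname{Tr}\!\big(A_i^{\dagger}|w\rangle\langle w|\big)=\overline{a_i}$ and $\big\||w\rangle\langle w|\big\|_{HS}^2=\operatorname{Tr}(|w\rangle\langle w|)=1$. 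Because the Schmidt sum runs only over the $m\le\min\{d_A^2,d_B^2\}$ nonzero terms, this yields $\sum_{i=1}^{m}|a_i|^2\le 1$, and likewise $\sum_{i=1}^{m}|b_i|^2\le 1$.

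Finally, using $0\le\mu_i\le\mu_1$ for all $i$ (the OSC are decreasingly ordered) together with the Cauchy--Schwarz inequality,
\begin{equation*}
\Big|\sum_{i=1}^{m}\mu_i a_i b_i\Big|\le \mu_1\sum_{i=1}^{m}|a_i|\,|b_i|\le \mu_1\Big(\sum_{i=1}^{m}|a_i|^2\Big)^{1/2}\Big(\sum_{i=1}^{m}|b_i|^2\Big)^{1/2}\le \mu_1,
\end{equation*}
so $\operatorname{Tr}(X\sigma)\le\mu_1$ and hence $\operatorname{Tr}(\mathcal{W}_{OSD}\sigma)\ge 0$ on all separable states. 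I expect the main subtlety to lie in the normalization step: one must exploit both that $\{A_i\},\{B_i\}$ are \emph{orthonormal} in the HS inner product (so that Parseval applies and each rank-one projector has unit HS norm) and that the Schmidt index ranges only over a sub-collection of the full operator basis, so that the partial sums are bounded by $1$ rather than equal to it. To confirm that $\mathcal{W}_{OSD}$ is a genuine (non-trivial) witness rather than a positive operator, I would note that it detects every $\rho$ with $\operatorname{Tr}(X\rho)>\mu_1$; such $\rho$ exist whenever the top eigenvalue of $X$ exceeds $\mu_1$, as happens for the entangled targets of interest (e.g.\ for a maximally entangled target all OSC equal $1/d$ while the largest eigenvalue of $X$ is $1$).
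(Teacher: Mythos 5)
Your proof is correct. Note that the paper you are being compared against does not actually prove this statement: Theorem~\ref{BiOSDwit} is quoted as a preliminary from the reference \cite{PRL24OSD}, so there is no in-paper proof to match. Your argument --- reduce to pure product states by convexity, write $\operatorname{Tr}\bigl(X\,(|w\rangle\langle w|\otimes|v\rangle\langle v|)\bigr)=\sum_i\mu_i a_i b_i$, bound $\sum_i|a_i|^2\le 1$ and $\sum_i|b_i|^2\le 1$ via Parseval for the Hilbert--Schmidt orthonormal bases, and finish with Cauchy--Schwarz --- is the standard proof of this result and is exactly the mechanism the cited work relies on. Your closing remark on non-triviality (exhibiting a state with $\operatorname{Tr}(X\rho)>\mu_1$, which the bare statement glosses over) is a careful addition: as you implicitly note, for some operators $X$ (e.g.\ product operators) no such state exists and $\mu_1\mathbb{I}-X$ is merely block-positive, so making that condition explicit strengthens rather than weakens the write-up.
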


It is crucial to underscore that the formulation of witness \eqref{OSDEW1} encompasses the fidelity-based witness for pure states, as detailed in \eqref{fidelitywitness}, as a specific case.
This analysis provides a more comprehensive perspective, highlighting the enhanced robustness exhibited by the Operator Schmidt decomposition (OSD) witnesses.

\subsection{Schmidt number witness}

Similar to the framework of entanglement witnesses, one can define Schmidt number witnesses as observables with expectation values that are nonnegative for all states with SN $k - 1$. Consequently, a negative outcome signifies an SN of at least $k$.

As already known, states with Schmidt number $k$ form a convex set in the set of all bipartite states. These subsets are denoted $S_{k}$ and fulfill $S_{k-1} \subset S_{k}$. A $k$-Schmidt
witness detects states with Schmidt number $k$ or higher and is defined as follows:

\begin{definition}{\rm\cite{PhysRevA.63.050301}}
($k$-Schmidt witness). A hermitian operator $\mathcal{SW}_{k}$ is a $k$-Schmidt witness if and only if it fulfills
$$
\operatorname{Tr}\left(\mathcal{SW}_{k} \sigma\right) \geq 0, \quad  \forall \sigma \in S_{k-1}
$$
and there exists at least one state $\rho \in S_k$ such that
$
\operatorname{Tr}\left(\mathcal{SW}_{k} \rho\right)<0 .
$
\end{definition}

Equation \eqref{d/k} can be converted into a $(k + 1)$-Schmidt number witness through the use of the observable \cite{PhysRevA.63.050301}
$$\operatorname{Tr}\left(\mathcal{SW}_{k }\right)=\frac{k}{n}\mathbb{I}- |\phi^{+}\rangle \langle\phi^{+}|.$$

The authors \cite{PRL24OSD} demonstrate that these witnesses can be formulated in a manner similar to Equation \eqref{OSDEW1}, where the prefactor $\mu_{1}$ is substituted with a distinct value $\lambda_k$. %This parameter, while not necessarily offering an optimal constraint, acts as a limit on the overlap of pure states with Schmidt rank $k - 1$ in relation to the operator $X$.
Interestingly, these $\lambda_k$ values are determined straightforwardly as solutions to a polynomial equation of order $(k - 1)$ in the Operator Schmidt Coefficients (OSC) of hermitian operator $X$ in Equation \eqref{OSDEW1}.
However, when Schmidt numbers surpass three, the prefactor no longer retains its compact nature.

\begin{theorem}{\rm\cite{PRL24OSD}}\label{BiOSDSNW}
(OSD Schmidt number witnesses). Let $X$ be a hermitian operator, then a $k$-Schmidt witness based on the OSD is given by
\begin{equation}\label{eq016}
  \mathcal{SW}_{k }=\lambda_k \mathbb{I}-X.
\end{equation}
The coefficient $\lambda_k$ depends on the Schmidt number $k$ and can be computed by
\begin{equation}\label{eq017}
  \lambda_k = \max _{\left|\Psi^{k-1}\right\rangle \in S_{k-1}}\left\langle\Psi^{k-1}|X| \Psi^{k-1}\right\rangle,
\end{equation}
where $S_{k-1}$ denotes the state set with Schmidt number $k-1$.
\end{theorem}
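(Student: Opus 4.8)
The plan is to verify directly the two defining properties of a $k$-Schmidt witness for the Hermitian operator $\mathcal{SW}_k = \lambda_k\mathbb{I} - X$. Hermiticity is immediate, since $X$ is Hermitian and $\lambda_k$ is real (being the maximum of real expectation values over a compact set). It then remains to establish (i) $\operatorname{Tr}(\mathcal{SW}_k\sigma)\ge 0$ for every $\sigma \in S_{k-1}$, and (ii) the existence of at least one $\rho\in S_k$ with $\operatorname{Tr}(\mathcal{SW}_k\rho)<0$.

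For (i), I would first rewrite $\operatorname{Tr}(\mathcal{SW}_k\sigma) = \lambda_k\operatorname{Tr}(\sigma) - \operatorname{Tr}(X\sigma) = \lambda_k - \operatorname{Tr}(X\sigma)$, using $\operatorname{Tr}(\sigma)=1$, so that it suffices to show $\operatorname{Tr}(X\sigma)\le\lambda_k$ for all $\sigma\in S_{k-1}$. The key observation is that $S_{k-1}$ is the convex hull of pure states of Schmidt rank at most $k-1$, so any $\sigma\in S_{k-1}$ admits a decomposition $\sigma=\sum_j q_j|\Psi_j^{k-1}\rangle\langle\Psi_j^{k-1}|$ with $q_j\ge 0$, $\sum_j q_j=1$ and each $|\Psi_j^{k-1}\rangle\in S_{k-1}$. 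By linearity of the trace,
$$\operatorname{Tr}(X\sigma)=\sum_j q_j\langle\Psi_j^{k-1}|X|\Psi_j^{k-1}\rangle\le\sum_j q_j\lambda_k=\lambda_k,$$
where the inequality is exactly the definition of $\lambda_k$ in Eq.~\eqref{eq017} as the maximum of $\langle\Psi^{k-1}|X|\Psi^{k-1}\rangle$ over $S_{k-1}$. Equivalently, a linear functional attains its maximum over the convex set $S_{k-1}$ at an extreme point, and the extreme points are precisely the Schmidt-rank-$(\le k-1)$ pure states. This establishes (i), essentially by the construction of $\lambda_k$.

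For (ii), I would introduce $f(r)=\max_{SR(|\Psi\rangle)\le r}\langle\Psi|X|\Psi\rangle$, a non-decreasing function of $r$ with $\lambda_k=f(k-1)$, and seek a state in $S_k$ whose expectation exceeds $\lambda_k$. By the same extreme-point reasoning, $\max_{\rho\in S_k}\operatorname{Tr}(X\rho)=f(k)$, so condition (ii) is equivalent to the strict inequality $f(k)>f(k-1)$: whenever this holds, the optimizing pure state $|\Psi^{k}\rangle$ of Schmidt rank $k$ satisfies $\langle\Psi^k|X|\Psi^k\rangle>\lambda_k$, and $\rho=|\Psi^k\rangle\langle\Psi^k|\in S_k$ then yields $\operatorname{Tr}(\mathcal{SW}_k\rho)<0$. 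This strict-monotonicity condition is precisely the requirement that $X$ genuinely probes Schmidt number $k$ rather than saturating already at rank $k-1$, and it holds for the operators $X$ arising from the OSD of high-Schmidt-number target states, which is the regime of interest.

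The main obstacle is not the witness-validity argument above, which is nearly tautological once $\lambda_k$ is defined as the relevant constrained maximum, but rather the evaluation of $\lambda_k$ itself. Determining $\max_{SR\le k-1}\langle\Psi|X|\Psi\rangle$ is a non-convex optimization over the compact but non-convex set of bounded-Schmidt-rank pure states, and the real content lies in reducing this quantity to an explicit expression in the operator Schmidt coefficients of $X$. I would therefore expect the genuine difficulty to be relating the abstract $\lambda_k$ of Eq.~\eqref{eq017} to the OSC of $X$ through a spectral or polynomial characterization, and in confirming that the strict-monotonicity condition underpinning (ii) indeed holds for the constructed witnesses.
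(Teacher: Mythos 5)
Your proposal is sound, but be aware that this statement is one the paper \emph{imports} from \cite{PRL24OSD}: the paper contains no proof of it anywhere (the Appendix proves only Theorem~\ref{thmthetamn} and Propositions~\ref{thmetamn}--\ref{thmzetamn}), so the only internal material to compare against is Eqs.~\eqref{eq018}--\eqref{eq019} and Appendix~A, which concern the \emph{evaluation} of $\lambda_k$ rather than the witness property itself. Your verification of nonnegativity on $S_{k-1}$ --- convexity of $S_{k-1}$, its extreme points being the pure states of Schmidt rank at most $k-1$, linearity of the trace, and the definition \eqref{eq017} of $\lambda_k$ as the maximum over those extreme points --- is exactly the standard argument and is correct. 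Your treatment of the second defining property of a $k$-Schmidt witness is also the right observation rather than a defect of your proof: as stated, the theorem cannot guarantee a detected state in $S_k$ for \emph{arbitrary} Hermitian $X$ (for a product operator $X$ one has, in your notation, $f(k)=f(k-1)$ for all $k$, and nothing is ever detected), so the strict-monotonicity condition $f(k)>f(k-1)$ you isolate is genuinely needed and is left implicit both in this paper and in \cite{PRL24OSD}; flagging it makes your account more careful than the source. Finally, you correctly locate where the real mathematical content lies: reducing the non-convex optimization \eqref{eq017} to an explicit function of the operator Schmidt coefficients, which is precisely what the paper's Eq.~\eqref{eq019}, the ordering matrices $M_u^k$, and the spectral-radius bounds of Theorem~\ref{thmthetamn} and Propositions~\ref{thmetamn}--\ref{thmzetamn} are constructed to address.
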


For $k=3$ in the context of Schmidt number, it is demonstrated that
\begin{equation*}
  \lambda_3=  \frac{1}{2}\big(\mu_1+\mu_4+\sqrt{(\mu_1-\mu_4)^2+(\mu_2+\mu_3)^2}\big),
\end{equation*}
where $\mu_1, \mu_2,\mu_3, \mu_4$ are Operator Schmidt coefficients of density operator $X$,  and the Schmidt number witness $\mathcal{SW}_{3}=\lambda_3 \mathbb{I}-X$ identifies specifically three-dimensional entanglement.

In the case of $k=4$, the coefficient $\lambda_4$ is defined as
$$
\lambda_4=\max \left\{\operatorname{maxEig}\left(M_1^{\text {symm }}\right), \operatorname{maxEig}\left(M_2^{\text {symm }}\right)\right\},
$$
with $\operatorname{maxEig}\left(M_n^{\text {symm }}\right)$ denoting the maximum eigenvalue of the symmetrized matrices $M_1^3$ and $M_2^3$, given by
\begin{equation}\label{matricesM42}
M_1^3=\left(\begin{array}{ccc}
\mu_1 & \mu_2 & \mu_4 \\
\mu_3 & \mu_6 & \mu_7 \\
\mu_5 & \mu_8 & \mu_9
\end{array}\right), \quad M_2^3=\left(\begin{array}{ccc}
\mu_1 & \mu_2 & \mu_5 \\
\mu_3 & \mu_4 & \mu_7 \\
\mu_6 & \mu_8 & \mu_9
\end{array}\right) .
\end{equation}

Analysis reveals that while $\lambda_4$ is computable through numerical methods, %its functional representation involving variables $\mu_1$ to $\mu_9$ is notably intricate,
defying precise mathematical expression.
For Schmidt numbers greater than four, numerous Operator Schmidt coefficients are derived from the product of vector Schmidt coefficients $s_{\alpha}s_{\beta}$ and are arranged in decreasing order, expanding the possibilities for defining matrices. As the Schmidt number increases, the computation of eigenvalues for multiple matrices experiences a significant upsurge, leading to a rise in computational expenses and complexity with dimensionality.

\subsection{Nonnegative matrix spectral radius theory}

Given a nonnegative matrix $M \in \mathbb{M}_n$, $P_r$ represents the sum of entries in the $r$-th row of $M$, $p=\min _r P_r$, $P=\max _r P_r$, and $\rho(M)$ denotes the maximum eigenvalue of $M$. Frobenius \cite{frobenius1908matrizen} demonstrated the following:
\begin{equation}\label{NMB1}
 p \leq \rho(M) \leq P,
\end{equation}
Moreover, Ledermann \cite{ledermann1950bounds} established that:
\begin{equation}\label{NMB2}
p+m\left(\frac{1}{\sqrt{\delta}}-1\right) \leq \rho(M) \leq P-m(1-\sqrt{\delta})
\end{equation}
where $m=\min _{r, s}\left(M_{r s}\right)$ is the smallest entry in $M$ and $\delta=\max\limits _{P_r<P_s}\left(\frac{P_r}{P_s}\right)$. Ostrowski \cite{ostrowski1952bounds} later refined the inequalities in \eqref{NMB2}: if $p<P$, then
\begin{equation}\label{NMB3}
p+m\left(\frac{1}{\sigma}-1\right) \leq \rho(M) \leq P-m(1-\sigma)
\end{equation}
where $\sigma=\sqrt{\frac{(p-m)}{(P-m)}}$. Subsequently, Brauer \cite{brauer1957theorems} improved upon the bounds in \eqref{NMB3} with:
\begin{equation}\label{NMB4}
p+m(h-1) \leq \rho(M) \leq P-m\left(1-\frac{1}{g}\right)
\end{equation}
where $g=\frac{P-2 m+\sqrt{P^2-4 m(P-p)}}{2(p-m)}$ and $h=\frac{-p+2 m+\sqrt{P^2+4 m(P-p)}}{2 m}$.

Inequalities \eqref{NMB1}-\eqref{NMB4} above illustrate that the overall approximation of an individual nonnegative matrix is contingent on its internal elements.
%Nonetheless, our research emphasizes determining the maximum spectral radius value across a specific set of matrices.

\section{Characterizing Schmidt number witnesses for bipartite quantum states in arbitrary dimensions}

%This section investigates the bipartite quantum state of the Schmidt number witness in general case. %Characterizing High Schmidt Number Witnesses in Arbitrary Dimensions System
This section characterizes high Schmidt number witnesses in arbitrary dimensions bipartite quantum system.
%We will now shift our attention to identifying witnesses that certify the Schmidt number in arbitrary dimensions.
Our goal is to ascertain the coefficient $\lambda_{k+1} (k\geq 5)$ that ensures the Schmidt number witness produces a nonnegative expectation value for all states $\left|\psi_k\right\rangle$ with $\mathrm{SN} (k)$.
It is known \cite{PRL24OSD} that determining the precise value of $\lambda_k (k\geq 4)$ becomes highly challenging.

To address this limitation, we introduce  upper bounds for the Schmidt number witness coefficient $\lambda_k$ based on the theory of nonnegative matrix spectral radius. Subsequently, we successfully construct a Schmidt number witness in arbitrary dimensions, which depends on the Operator Schmidt coefficients $\mu_1,\cdots, \mu_{(k-1)^2}$.  These insights are available for arbitrary dimensions, enabling the construction of Schmidt number witnesses across different dimensional spaces.

There is an important fact about the Operator Schmidt coefficients, that is, the sum of the squares of the OSC of any density matrices equals its purity \cite{lupo2008bipartite,aniello2009relation}. Hence, the following Lemma is desired immediately.

\begin{lemma}\label{lem3.1mu}
Let a bipartite state $\rho_{A B} \in \mathcal{S}_{A B}$ admit an Operator Schmidt decomposition of the form $\rho_{A B}=\sum\limits_{i=1}^{m} \mu_i A_i \otimes B_i$,  where $\left\{A_i\right\}_{i=1}^{d_A^2}$ and $\left\{B_i\right\}_{i=1}^{d_B^2}$ form  bases of $\mathcal{L}\left(\mathcal{H}_A\right)$ and $\mathcal{L}\left(\mathcal{H}_B\right)$, respectively, orthonormal with respect to the $H S$ inner product. Then all Operator Schmidt coefficients $\mu_i$ are less than one, i.e,
\begin{equation*}
  \mu_{i}\leq 1, \quad i=1,\cdots, m.
\end{equation*}
\end{lemma}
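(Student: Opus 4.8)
The plan is to derive the bound directly from the purity identity for the operator Schmidt coefficients. First I would establish the Parseval-type relation $\sum_{i=1}^m \mu_i^2 = \operatorname{Tr}(\rho_{AB}^2)$. Using Hermiticity of $\rho_{AB}$ to write $\operatorname{Tr}(\rho_{AB}^2)=\operatorname{Tr}(\rho_{AB}^\dagger \rho_{AB})$, then substituting the OSD $\rho_{AB}=\sum_i \mu_i A_i \otimes B_i$ and expanding, one obtains $\sum_{i,j}\mu_i\mu_j \operatorname{Tr}(A_i^\dagger A_j)\operatorname{Tr}(B_i^\dagger B_j)$. Since $\{A_i\}$ and $\{B_i\}$ are orthonormal with respect to the HS inner product $\langle X,Y\rangle=\operatorname{Tr}(X^\dagger Y)$, we have $\operatorname{Tr}(A_i^\dagger A_j)=\delta_{ij}$ and $\operatorname{Tr}(B_i^\dagger B_j)=\delta_{ij}$, so every cross term vanishes and the double sum collapses to $\sum_i \mu_i^2$. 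This reproduces exactly the stated fact that the sum of the squares of the OSC equals the purity.

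Next I would invoke the elementary purity bound for density matrices. Because $\rho_{AB}$ is PSD, Hermitian, and trace-one, its eigenvalues $p_\ell$ are nonnegative and satisfy $\sum_\ell p_\ell = 1$; hence $\operatorname{Tr}(\rho_{AB}^2)=\sum_\ell p_\ell^2 \le \big(\sum_\ell p_\ell\big)^2 = 1$. Combining this with the identity from the first step gives $\sum_{i=1}^m \mu_i^2 \le 1$.

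Finally, since the operator Schmidt coefficients are nonnegative, $\mu_i \ge 0$, each individual term satisfies $\mu_i^2 \le \sum_{j=1}^m \mu_j^2 \le 1$, and taking square roots yields $\mu_i \le 1$ for every $i=1,\dots,m$, as claimed.

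There is no substantive obstacle here: the result is a one-line consequence of the purity identity together with the inequality $\sum_\ell p_\ell^2 \le \big(\sum_\ell p_\ell\big)^2$ valid for any probability vector. The only point that warrants a moment of care is verifying that the HS-orthonormality of the two operator bases is precisely what forces the cross terms to cancel, so that the purity genuinely equals $\sum_i \mu_i^2$ rather than a more complicated quadratic form in the $\mu_i$.
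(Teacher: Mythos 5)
Your proof is correct and follows essentially the same route as the paper: the identity $\sum_i \mu_i^2 = \operatorname{Tr}(\rho_{AB}^2)$ combined with the purity bound $\operatorname{Tr}(\rho_{AB}^2)\le 1$ and nonnegativity of the $\mu_i$. The only difference is that you derive the Parseval-type identity and the purity bound explicitly, whereas the paper simply cites them from the literature.
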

\begin{proof}
%For Operator Schmidt coefficients and Schmidt equivalence classes of bipartite states,
The following facts are known in the literature \cite{lupo2008bipartite,aniello2009relation}:
\begin{equation*}
  \sum\limits_{i}\mu_{i}^2=\|\langle \rho_{A B},\rho_{A B}\rangle\|_{HS}=\|\rho_{A B}\|_{2}^{2}=\operatorname{Tr}(\rho_{A B}^2)\leq 1.
\end{equation*}
This shows all Operator Schmidt coefficients $\mu_i$ are less than one. The proof is completed.
\end{proof}

From Theorem \ref{BiOSDSNW}, equation \eqref{eq017} ensures $\mathcal{SW}_{k+1}$'s expectation positivity on states with Schmidt number $k$. Subsequently, we analyze \eqref{eq016} to determine coefficients $\lambda_{k+1}$:
\begin{equation}\label{eq018}
\max _{\left|\Psi^{k}\right\rangle \in S_{k}}\left\langle\Psi^{k}|X| \Psi^{k}\right\rangle=\max _{\left|\Psi^{k}\right\rangle \in S_{k}} \operatorname{Tr}\left(X\left|\Psi^{k}\right\rangle\left\langle\Psi^{k}\right|\right) .
\end{equation}
By decomposing $X$ and $\left|\Psi^{k}\right\rangle\left\langle\Psi^{k}\right|$, we find:
\begin{eqnarray}\label{eq019}
&&\max_{\bar{s}_l}\operatorname{Tr}\left(\sum_j \mu_j G_j^A \otimes G_j^B \sum_l \tilde{s}_l H_l^A \otimes H_l^B\right) \nonumber\\
\leq &&\max_{\bar{s}} \sum_{j,l} \mu_j \tilde{s}_{l\downarrow}.
\end{eqnarray}
Here, $\tilde{s}_{l\downarrow}$ are the decreasingly ordered operator Schmidt coefficients of $\left|\Psi^{k}\right\rangle\left\langle\Psi^{k}\right|$, given by the products of vector Schmidt coefficients $s_i$ of $\left|\Psi^{k}\right\rangle = \sum_{i=1}^{k} s_i|\alpha_i \beta_i\rangle$.
Furthermore, the coefficients $s_{1},\cdots, s_{k}$ obey the constraints $s_{1}^2+\cdots+ s_{k}^2=1$ and $s_{1}\geq  \cdots \geq  s_{k}\geq 0$.
%In this context, these $\lambda_{k+1}$ are simply given by the solution of a $k$th order polynomial equation in the OSC of $X$.
In this scenario, the values of $\lambda_{k+1}$ are determined as solutions to a polynomial equation of order $k$ within the OSC associated with $X$.

Following a procedure akin to that of $k=3,4$ \cite{PRL24OSD}, the Operator Schmidt decomposition (OSD) for the projector $\left|\psi_k\right\rangle\left\langle\psi_k\right|$ is defined by the set
\begin{equation*}
  \left\{\tilde{\mu}_j\right\}=\left\{s_1 s_1, \cdots, s_1 s_k, s_2^2, \cdots, s_{k-1} s_k, s_k^2\right\},
\end{equation*}
For the function $f_{i}\left(s_1, \cdots, s_k\right)=\sum_{i j} \mu_i \tilde{\mu}_j$ to be derived, it is necessary to arrange the OSD coefficients $\tilde{\mu}_j$ in descending order.
The coefficient $\lambda_{k+1}$ can be obtained as the maximum of the following functions:
\begin{equation}\label{eqfk2}
  \max_{s_1, \cdots, s_k}\left\{f_i\left(s_1, \cdots, s_k\right)\right\},
\end{equation}
where $i$ depends on the number of arrangements for $\{\tilde{\mu}_j\}$.

There are always such orderings for $\{s_{i} s_{j}\}_{1\le i,j\le k}$:
\begin{equation*}
  s_{1}^{2}\geq s_{1} s_{2}=s_{2} s_{1}\geq \cdots\geq  s_{1}s_{m} \geq  s_{2}^{2}\geq s_{2}s_{3}\geq  \cdots\geq s_{k}^{2},
\end{equation*}
there are corresponding $k$ order square matrix $M_u^{k}$ as following:
\begin{equation*}
M_u^{k}=
  \begin{pmatrix}
   \mu_1 & \mu_2 & \cdots & \mu_{2k-4}&\mu_{2k-2}\\
    \mu_3 & \mu_{2k} & \cdots & \mu_{4k-5} &\mu_{4k-3} \\
    \vdots & \vdots & \vdots & \vdots &\vdots\\
    \mu_{2k-3}& \mu_{4k-6} & \cdots & \mu_{k^2-3} & \mu_{k^2-2} &\\
    \mu_{2k-1} & \mu_{4k-4} & \cdots & \mu_{k^2-1} &\mu_{k^2}\\
  \end{pmatrix},
\end{equation*}
where $u$ is the number of arrangements for $\{\tilde{\mu}_j\}$.
It is not difficult to find that the first and k-th row of matrix $M_u^{k}$ exhibit the largest sum of the row values.

Based on Lemma \ref{lem3.1mu} and inequality \eqref{NMB4}, we present our results and construct suitable Schmidt number witnesses.

\begin{theorem}\label{thmthetamn}
Consider a hermitian operator $X$ in $\mathcal{L}(\mathcal{H}_m) \otimes \mathcal{L}(\mathcal{H}_n)$ $(n\geq m\geq 5)$, the optimal $(k+1)$-Schmidt witness based on the OSD is $\mathcal{SW}_{(k+1) }=\lambda_{k+1} \mathbb{I}-X$, with $\lambda_{k+1} = \max _{\left|\Psi^{k}\right\rangle \in S_{k}}\left\langle\Psi^{k}|X| \Psi^{k}\right\rangle$ and $S_{k}$ denotes the states set with Schmidt number $k$. We define the $\theta_{k+1}$ as following:
\begin{equation*}
  \theta_{k+1}= P-\mu_{k^2}+\mu_{k^2}\frac{2(p-\mu_{k^2})}{P-2 \mu_{k^2}+\sqrt{P^2-4 \mu_{k^2}(P-p)}},
\end{equation*}
where $\mu_i, i=1,\cdots,k^2$ are Operator Schmidt coefficients of $X$,  $P=\mu_1+\mu_2+\dots+\mu_{2k-2}$ and $p=\mu_{2k-1}+\mu_{4k-4}+\cdots+\mu_{k^2-1}+\mu_{k^2}$.
Notably,
%\begin{tiny}
%\begin{equation}\label{eqsnm3}
%  \lambda_{k+1}\leq \theta_{k+1}= P-\mu_{k^2}+\mu_{k^2}\frac{2(p-\mu_{k^2})}{P-2 \mu_{k^2}+\sqrt{P^2-4 \mu_{k^2}(P-p)}},
%\end{equation}
%\end{tiny}
\begin{equation}\label{eqsnm3}
  \lambda_{k+1}\leq \theta_{k+1}.
\end{equation}
Subsequently, we construct the following $(k+1)$-Schmidt witness:
\begin{equation*}
  \mathcal{SW}_{(k+1)}=\theta_{k+1} \mathbb{I}-X.
\end{equation*}
\end{theorem}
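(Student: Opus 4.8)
The plan is to bound the variational quantity $\lambda_{k+1} = \max_{|\Psi^k\rangle \in S_k}\langle\Psi^k|X|\Psi^k\rangle$ supplied by Theorem \ref{BiOSDSNW} using the Brauer spectral-radius estimate \eqref{NMB4} applied to the nonnegative matrices $M_u^k$, and then to verify that the resulting bound is exactly $\theta_{k+1}$ and is uniform over all admissible orderings $u$.

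First I would reduce the witness coefficient to a finite eigenvalue problem. Writing $|\Psi^k\rangle = \sum_{i=1}^k s_i|\alpha_i\beta_i\rangle$ and expanding $X$ in its operator Schmidt decomposition, the trace inequality invoked in \eqref{eq019} gives $\langle\Psi^k|X|\Psi^k\rangle \le \sum_l \mu_l\,\tilde\mu_{l\downarrow}$, where $\tilde\mu_{l\downarrow}$ are the decreasingly ordered products $\{s_is_j\}$. On each region of the Schmidt simplex where the ordering of these products is fixed (indexed by $u$), the right-hand side is the quadratic form $s^\top M_u^k s$, so maximizing over $\|s\|=1$ with $s_1\ge\cdots\ge s_k\ge 0$ yields the largest eigenvalue of the (symmetrized) matrix $M_u^k$; taking the maximum over $u$ recovers $\lambda_{k+1}$.

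Next, since every $M_u^k$ is entrywise nonnegative, its largest eigenvalue coincides with its spectral radius $\rho(M_u^k)$ by Perron--Frobenius, so Brauer's inequality \eqref{NMB4} applies. I would then carry out the combinatorial step of locating each $\mu_i$ inside $M_u^k$: because the Schmidt products obey $s_1^2 \ge s_1s_2 \ge \cdots \ge s_k^2$, the largest coefficients populate the first row and the smallest the $k$-th row for every ordering $u$. This identifies the maximal row sum $P = \mu_1+\mu_2+\cdots+\mu_{2k-2}$, the minimal row sum $p = \mu_{2k-1}+\mu_{4k-4}+\cdots+\mu_{k^2-1}+\mu_{k^2}$, and the minimal entry $m = \mu_{k^2}$, none of which depend on $u$. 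Substituting $P$, $p$, $m$ into the upper bound of \eqref{NMB4} produces precisely $\theta_{k+1}$, while Lemma \ref{lem3.1mu} guarantees $m=\mu_{k^2}\le 1$, keeping the radicand and the denominator in $\theta_{k+1}$ well defined. Hence $\lambda_{k+1} = \max_u \rho(M_u^k) \le \theta_{k+1}$, and since $\theta_{k+1}\ge\lambda_{k+1}$ one has $\mathcal{SW}_{(k+1)} = \theta_{k+1}\mathbb{I}-X = (\theta_{k+1}-\lambda_{k+1})\mathbb{I} + (\lambda_{k+1}\mathbb{I}-X)$, a sum of a PSD multiple of the identity and the witness of Theorem \ref{BiOSDSNW}, so it too has nonnegative expectation on all of $S_k$, i.e.\ it is a valid $(k+1)$-Schmidt witness.

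The main obstacle is the reduction in the second and third steps: passing rigorously from the constrained maximization over Schmidt vectors to a single nonnegative matrix whose Perron root is controlled by Brauer's bound. Two points demand care. The first is the symmetrization of the quadratic form, since the eigenvalue attained by the variational problem is that of the symmetric part of $M_u^k$, whose row sums differ from those of $M_u^k$ itself; one must check that the estimate built from the raw row sums $P$, $p$ still dominates it. The second is verifying that the first and $k$-th rows remain the extremal rows \emph{simultaneously} for every admissible ordering $u$, so that one common $\theta_{k+1}$ bounds all of them. Once this ordering-independence of $P$, $p$, and $m$ is established, the inequality \eqref{eqsnm3} follows at once from \eqref{NMB4}.
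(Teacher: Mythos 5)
Your overall strategy is the same as the paper's: reduce $\lambda_{k+1}$ to the quadratic forms $s^\top M_u^k s$ built from the orderings of the products $\{s_is_j\}$, observe that the matrices $M_u^k$ are entrywise nonnegative, and apply Brauer's estimate \eqref{NMB4}. However, there is a genuine gap at the step you call the ``combinatorial step'': your claim that the maximal row sum $P$, the minimal row sum $p$, and hence the Brauer bound itself are independent of the ordering $u$ (``none of which depend on $u$'') is false. Different admissible orderings place different coefficients $\mu_i$ in the extremal rows. This is visible in the paper's own Appendix for $k=4$: the matrix $M_1^4$ has first-row sum $\mu_1+\mu_2+\mu_4+\mu_6$ and last-row sum $\mu_7+\mu_{12}+\mu_{15}+\mu_{16}$, while $M_3^4$ has first-row sum $\mu_1+\mu_2+\mu_4+\mu_{10}$ and last-row sum $\mu_{11}+\mu_{13}+\mu_{15}+\mu_{16}$. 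Consequently Brauer's inequality gives a \emph{different} number $\theta_{k+1}(M_i^k)$ for each ordering $i$, and your argument as written only establishes $\rho(M_u^k)\le\theta_{k+1}$ for the single reference ordering whose extremal rows are $(\mu_1,\mu_2,\mu_4,\ldots,\mu_{2k-2})$ and $(\mu_{2k-1},\mu_{4k-4},\ldots,\mu_{k^2-1},\mu_{k^2})$; it does not yield a bound uniform in $u$, which is what \eqref{eqsnm3} requires.

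The paper closes exactly this gap with an analytic step your proposal lacks: it defines
$g(P,p)=P-m+\frac{2m(p-m)}{P-2m+\sqrt{P^2-4m(P-p)}}$ with $m=\mu_{k^2}$, computes $\partial g/\partial P\ge 0$ and $\partial g/\partial p\ge 0$, and uses the fact that (because the $\mu_i$ are decreasingly ordered) the reference ordering simultaneously maximizes both $P(M_i^k)$ and $p(M_i^k)$ over all $i$; monotonicity then gives $\rho(M_i^k)\le g\bigl(P(M_i^k),p(M_i^k)\bigr)\le g\bigl(P(M_u^k),p(M_u^k)\bigr)=\theta_{k+1}$ for every ordering at once. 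Without this monotonicity argument (or some substitute), the uniform bound does not follow, so this is the concrete missing ingredient you would need to add. On the other hand, the symmetrization caveat you raise is well taken: the variational problem actually produces $\operatorname{maxEig}\bigl((M_u^k)^{\mathrm{symm}}\bigr)$, which can strictly exceed $\rho(M_u^k)$, and the row sums of the symmetrized matrix differ from $P$ and $p$; the paper applies \eqref{NMB4} to the unsymmetrized matrices without addressing this point either, so there you have identified a weakness shared by the paper's proof rather than one specific to your attempt.
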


While $\lambda_{k+1}$ represents the optimal witness coefficient for Schmidt number $k$, its direct determination or calculation is often challenging. Hence, we explore its feasible upper limit to facilitate the construction of an appropriate witness.

The proof of Theorem \ref{thmthetamn} is presented in Appendix A. It is evident that $\eta_{k+1}$ is solely dependent on the Operator Schmidt coefficients, with $\eta_{k+1}\leq P_{k+1}$. In cases where the smallest operator Schmidt coefficient $\mu_{k^2}=0$, we have $\theta_{k+1}=P_{k+1}=\mu_1+\mu_2+\dots+\mu_{2k-2}$. %Furthermore, the utilization of inequality \eqref{NMB3} enables us to promptly present the subsequent conclusion.

In a similar manner, based on Lemma \ref{lem3.1mu} and inequality \eqref{NMB2}, there is the following result.

\begin{proposition}\label{thmetamn}
Using the same notations as in Theorem \ref{thmthetamn}.  We define the $\eta_{k+1}$ as following:
\begin{eqnarray*}
% \nonumber % Remove numbering (before each equation)
  \eta_{k+1} = && \mu_1+\mu_2+\dots+\mu_{2k-2} \\
   -&& \mu_{k^2}\left(1-\sqrt{\frac{\mu_{2k-1}+\mu_{4k-4}+\cdots+\mu_{k^2-1}+\mu_{k^2}}{\mu_1+\mu_2+\dots+\mu_{2k-2}}}\right).
\end{eqnarray*}

Subsequently, we construct the following $(k+1)$-Schmidt number witness:
\begin{equation*}
  \mathcal{SW}_{(k+1) }=\eta_{k+1} \mathbb{I}-X.
\end{equation*}
\end{proposition}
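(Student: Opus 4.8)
The plan is to prove the inequality $\lambda_{k+1}\le \eta_{k+1}$ and then read off the witness, following the route already taken for Theorem \ref{thmthetamn} but substituting Ledermann's bound \eqref{NMB2} for Brauer's bound \eqref{NMB4} at the final step. First I would start from the optimal coefficient $\lambda_{k+1}=\max_{|\Psi^{k}\rangle\in S_k}\langle\Psi^{k}|X|\Psi^{k}\rangle$ in \eqref{eq017} and insert the operator Schmidt decompositions of $X$ and of the rank-one projector $|\Psi^{k}\rangle\langle\Psi^{k}|$. By the rearrangement/von Neumann trace inequality used to obtain \eqref{eq019}, the objective is dominated by $\sum_{i}\mu_i\,\tilde\mu_{i\downarrow}$, where the $\tilde\mu_{i\downarrow}$ are the decreasingly ordered products $s_as_b$ of the vector Schmidt coefficients of $|\Psi^{k}\rangle$. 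Writing $s=(s_1,\dots,s_k)^{\top}$, these products are exactly the entries of $ss^{\top}$, so within each region of $\{s_1\ge\cdots\ge s_k\ge 0\}$ on which the ordering of the $s_as_b$ is fixed, the bound equals the quadratic form $s^{\top}M_u^{k}s$. Maximizing over the nonnegative unit vector $s$ then reduces $\lambda_{k+1}$ to $\max_u$ of the largest eigenvalue of the (symmetrized) nonnegative matrix, i.e. to the Perron root of $M_u^{k}$, exactly as in the $k=3,4$ computations recalled before Theorem \ref{thmthetamn}.

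Second, I would bound each such Perron root with the Ledermann inequality \eqref{NMB2}, applied to the nonnegative matrix $M_u^{k}$. Because the operator Schmidt coefficients are decreasingly ordered and each of $\mu_1,\dots,\mu_{k^2}$ occupies exactly one entry, the smallest entry is $m=\mu_{k^2}$; the first row collects the largest coefficients and realizes the maximal row sum $P=\mu_1+\mu_2+\dots+\mu_{2k-2}$, while the $k$-th row collects the smallest and realizes the minimal row sum $p=\mu_{2k-1}+\mu_{4k-4}+\cdots+\mu_{k^2-1}+\mu_{k^2}$. Taking $\delta=p/P$ and substituting $m=\mu_{k^2}$ into the upper half $\rho(M)\le P-m(1-\sqrt{\delta})$ of \eqref{NMB2} reproduces precisely the expression defining $\eta_{k+1}$ in the statement; Lemma \ref{lem3.1mu}, together with $0\le\mu_{k^2}\le p\le P$, ensures $\sqrt{\delta}\le 1$ so the bound is meaningful. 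Since this holds for every ordering $u$, I obtain $\lambda_{k+1}\le\eta_{k+1}$, whence for any $\sigma\in S_k$ one has $\operatorname{Tr}\big((\eta_{k+1}\mathbb{I}-X)\sigma\big)=\eta_{k+1}-\operatorname{Tr}(X\sigma)\ge\eta_{k+1}-\lambda_{k+1}\ge 0$, so $\mathcal{SW}_{(k+1)}=\eta_{k+1}\mathbb{I}-X$ is a valid $(k+1)$-Schmidt witness.

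The main obstacle is the reduction step rather than the Ledermann substitution, which is a one-line replacement of \eqref{NMB4} by \eqref{NMB2}. The delicate part is the combinatorial control of the matrices $M_u^{k}$: one must enumerate the admissible descending orderings of $\{s_is_j\}_{1\le i,j\le k}$---which depend on the unconstrained magnitudes of $s_1\ge\cdots\ge s_k$---and check that in every ordering $u$ the coefficients land so that $\mu_{k^2}$ is the minimal entry and the first and $k$-th rows give the extreme row sums. A second subtlety, already implicit in the $\lambda_4$ formula, is that $\lambda_{k+1}$ is the largest eigenvalue of the symmetrized matrix, whereas \eqref{NMB2} is phrased through the row-sum data of $M_u^{k}$ itself; one must verify that the minimal entry and the extreme row sums recorded above remain valid bounds after symmetrization, using the monotonicity of the decreasing sequence $\mu_i$. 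Both points are exactly the bookkeeping performed for Theorem \ref{thmthetamn} in Appendix A, so I would import that analysis and change only the terminal inequality to \eqref{NMB2}.
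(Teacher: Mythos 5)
Your proposal is correct and follows essentially the same route as the paper: the paper's proof likewise imports the reduction to the nonnegative matrices $M_u^{k}$ from the proof of Theorem \ref{thmthetamn}, applies Ledermann's bound \eqref{NMB2} with $m=\mu_{k^2}$, $P=\mu_1+\cdots+\mu_{2k-2}$, $p=\mu_{2k-1}+\mu_{4k-4}+\cdots+\mu_{k^2}$, and settles the comparison across orderings exactly as you anticipate, by showing the bound $g(P,p)=P-\mu_{k^2}\bigl(1-\sqrt{p/P}\bigr)$ is increasing in both arguments so that the maximum is attained at $M_u^{k}$. The only difference is one of emphasis: you flag the symmetrization subtlety explicitly, which the paper's proof passes over silently.
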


Similarly, by referencing inequality \eqref{NMB3}, we can promptly introduce the subsequent conclusion.

\begin{proposition}\label{thmzetamn}
Using the same notations as in Theorem \ref{thmthetamn}.  We introduce the $\zeta_{k+1}$ as following:
\begin{eqnarray*}
  \zeta_{k+1} = && \mu_1+\mu_2+\dots+\mu_{2k-2} \\
   -&& \mu_{k^2}\left(1-\sqrt{\frac{\mu_{2k-1}+\mu_{4k-4}+\cdots+\mu_{k^2-1}}{\mu_1+\mu_2+\dots+\mu_{2k-2}-\mu_{k^2}}}\right).
\end{eqnarray*}

Next, we formulate the $(k+1)$-Schmidt number witness as follows:
\begin{equation*}
  \mathcal{SW}_{(k+1) }=\zeta_{k+1} \mathbb{I}-X.
\end{equation*}
\end{proposition}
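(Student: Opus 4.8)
The plan is to run the proof of Theorem~\ref{thmthetamn} unchanged up to its final inequality, and there substitute Ostrowski's bound \eqref{NMB3} for Brauer's bound \eqref{NMB4}. Concretely, I would start from the optimal coefficient $\lambda_{k+1}=\max_{|\Psi^k\rangle\in S_k}\langle\Psi^k|X|\Psi^k\rangle$ together with the estimate \eqref{eq019}. Applying the rearrangement inequality to pair the decreasingly ordered OSCs $\mu_1\geq\cdots\geq\mu_{k^2}$ of $X$ with the decreasingly ordered products $\tilde{\mu}_j=s_as_b$ of $|\Psi^k\rangle\langle\Psi^k|$, the maximization over the admissible vector Schmidt coefficients $s_1\geq\cdots\geq s_k\geq 0$ is controlled by the Perron root $\rho(M_u^k)$ of the nonnegative matrices $M_u^k$ whose entries are precisely $\mu_1,\ldots,\mu_{k^2}$, giving $\lambda_{k+1}\leq\max_u\rho(M_u^k)$. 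This reduction is identical to the one underlying Theorem~\ref{thmthetamn}, so I would reuse it verbatim.

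Next I would read off the three quantities feeding Ostrowski's inequality for $M_u^k$. Because the OSCs are decreasingly ordered and the index placement of $M_u^k$ sends the smallest indices to the top row and the largest to the bottom row, the maximal row sum is the first-row sum $P=\mu_1+\mu_2+\cdots+\mu_{2k-2}$, the minimal row sum is the last-row sum $p=\mu_{2k-1}+\mu_{4k-4}+\cdots+\mu_{k^2-1}+\mu_{k^2}$, and the smallest entry is $m=\mu_{k^2}$, sitting in the bottom-right corner. Plugging these into $\sigma=\sqrt{(p-m)/(P-m)}$ yields exactly the radicand of $\zeta_{k+1}$, since $p-m=\mu_{2k-1}+\mu_{4k-4}+\cdots+\mu_{k^2-1}$ and $P-m=\mu_1+\cdots+\mu_{2k-2}-\mu_{k^2}$; the upper estimate $\rho(M_u^k)\leq P-m(1-\sigma)$ from \eqref{NMB3} then reproduces $\zeta_{k+1}$ term for term.

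To finish, I would use Lemma~\ref{lem3.1mu} to guarantee $0\leq m=\mu_{k^2}\leq 1$ so that the bound is well posed, and observe that the hypothesis $p<P$ of \eqref{NMB3} holds in the nondegenerate case (when $p=P$ the row sums are constant and $\rho(M_u^k)=P$ already by \eqref{NMB1}, trivially recovering the claim). Taking the maximum over all arrangements $u$ gives $\lambda_{k+1}\leq\max_u\rho(M_u^k)\leq\zeta_{k+1}$, whence $\operatorname{Tr}\big((\zeta_{k+1}\mathbb{I}-X)\sigma\big)\geq 0$ for every $\sigma\in S_k$, and $\mathcal{SW}_{(k+1)}=\zeta_{k+1}\mathbb{I}-X$ is a valid $(k+1)$-Schmidt witness.

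The main obstacle is not the final substitution, which is mechanical once \eqref{NMB3} is available, but the extremality claim flagged as ``not difficult to find'' before Theorem~\ref{thmthetamn}: that the first row of $M_u^k$ realizes the maximal row sum $P$ and the last row the minimal row sum $p$. The subtlety is that different arrangements $u$ permute which coefficient $\mu_i$ occupies which middle position, so one must verify that the extremal rows---and hence the single triple $(P,p,m)$ used to define $\zeta_{k+1}$---dominate uniformly over all admissible orderings of the products $\{s_as_b\}$. Establishing this monotonicity of the row sums from top to bottom, uniformly in $u$, is the delicate combinatorial step; once secured, Proposition~\ref{thmzetamn} follows by the identical template as Theorem~\ref{thmthetamn}, with only the choice of nonnegative-matrix bound altered.
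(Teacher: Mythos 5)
Your proposal follows essentially the same route as the paper's proof: both reduce $\lambda_{k+1}$ to spectral-radius bounds on the nonnegative matrices $M_i^k$ exactly as in Theorem \ref{thmthetamn}, apply Ostrowski's inequality \eqref{NMB3} with $P=\mu_1+\mu_2+\cdots+\mu_{2k-2}$, $p=\mu_{2k-1}+\mu_{4k-4}+\cdots+\mu_{k^2-1}+\mu_{k^2}$ and $m=\mu_{k^2}$, and then dispose of the remaining arrangements by monotonicity of the bound in $(P,p)$ --- precisely the paper's $g(P,p)$ argument, which is the same extremality step you flag as the delicate point. Your explicit handling of the degenerate case $p=P$ (falling back on \eqref{NMB1}) is a minor point the paper omits, but otherwise the arguments coincide.
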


The proof of Proposition \ref{thmetamn} and Proposition \ref{thmzetamn} are presented in Appendix B.

\begin{remark}
It is worth noting that the following inequality holds:
\begin{equation}\label{eq019all}
  \lambda_{k+1}\leq \theta_{k+1}\leq \zeta_{k+1}\leq \eta_{k+1} \leq \mu_1+\mu_2+\dots+\mu_{2k-2},
\end{equation}
indicating that the coefficient $\lambda_{k+1}$ is less than the largest row sum of all matrices $M_{i}^{k}$.
%Specifically, when $\mu_{k^2}=0$, some of the aforementioned equations hold: $\theta_{k+1}= \zeta_{k+1}=\eta_{k+1} = \mu_1+\mu_2+\dots+\mu_{2k-2}$.
\end{remark}

Specifically, when $\mu_{k^2}=0$, some of the aforementioned equations hold.%: $\theta_{k+1}= \zeta_{k+1}=\eta_{k+1} = \mu_1+\mu_2+\dots+\mu_{2k-2}$.

\begin{proposition}\label{pro01}
Using the same notations as in Theorem \ref{thmthetamn}, Remark \ref{thmetamn} and Remark \ref{thmzetamn}.
If $\mu_{k^2}=0$, there are
$$\lambda_{k+1}\leq \theta_{k+1}= \zeta_{k+1}=\eta_{k+1} = \mu_1+\mu_2+\dots+\mu_{2k-2}.$$
\end{proposition}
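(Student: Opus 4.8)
The plan is to split the claim into two parts: the inequality $\lambda_{k+1}\le\theta_{k+1}$ and the chain of equalities $\theta_{k+1}=\zeta_{k+1}=\eta_{k+1}=\mu_1+\mu_2+\dots+\mu_{2k-2}$ under the hypothesis $\mu_{k^2}=0$. The inequality requires no new argument: it is precisely inequality \eqref{eqsnm3} of Theorem \ref{thmthetamn}, which holds for every hermitian $X$ independently of whether $\mu_{k^2}$ vanishes, so I would dispatch it by directly invoking that theorem.

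For the equalities, the strategy is straightforward substitution of $\mu_{k^2}=0$ into each of the three closed-form expressions, using the fact that in every case the only deviation from $P=\mu_1+\dots+\mu_{2k-2}$ is a correction term that carries an explicit factor of $\mu_{k^2}$. For $\eta_{k+1}$ and $\zeta_{k+1}$ the correction terms have the form $\mu_{k^2}\bigl(1-\sqrt{\,\cdot\,}\bigr)$, which collapses to $0$ as soon as $\mu_{k^2}=0$, leaving $\eta_{k+1}=\zeta_{k+1}=P$. For $\theta_{k+1}$ I would substitute $\mu_{k^2}=0$ into $P-\mu_{k^2}+\mu_{k^2}\frac{2(p-\mu_{k^2})}{P-2\mu_{k^2}+\sqrt{P^2-4\mu_{k^2}(P-p)}}$: the first term becomes $P$, while the entire second summand again carries the prefactor $\mu_{k^2}=0$.

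The one point that needs care — and the only place where the argument is more than bookkeeping — is confirming that the second summand of $\theta_{k+1}$ is \emph{unambiguously} zero, i.e.\ that the fraction multiplying $\mu_{k^2}$ is finite rather than of indeterminate $0/0$ type. At $\mu_{k^2}=0$ the denominator reduces to $P+\sqrt{P^2}=2P$, so it suffices to show $P>0$. This follows because $X$ is a nonzero operator, so its largest operator Schmidt coefficient satisfies $\mu_1>0$ (the $\mu_i$ are nonnegative and decreasingly ordered in the operator Schmidt decomposition), whence $P=\mu_1+\dots+\mu_{2k-2}\ge\mu_1>0$. With $P>0$ the fraction equals the finite value $p/P$, and multiplying by the exact prefactor $\mu_{k^2}=0$ yields $0$ cleanly. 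For completeness I would also note that the radicand $P^2-4\mu_{k^2}(P-p)$ and the square-root arguments in $\eta_{k+1},\zeta_{k+1}$ are nonnegative, so all expressions are well-defined at $\mu_{k^2}=0$. Combining $\theta_{k+1}=\zeta_{k+1}=\eta_{k+1}=\mu_1+\dots+\mu_{2k-2}$ with the first part then completes the proof.
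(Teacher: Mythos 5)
Your proof is correct and follows essentially the same route as the paper, which treats this proposition as an immediate consequence of substituting $\mu_{k^2}=0$ into the closed-form expressions for $\theta_{k+1}$, $\zeta_{k+1}$, $\eta_{k+1}$ (each correction term carries an explicit factor $\mu_{k^2}$) and of invoking inequality \eqref{eqsnm3} from Theorem \ref{thmthetamn} for $\lambda_{k+1}\le\theta_{k+1}$. Your extra check that $P=\mu_1+\dots+\mu_{2k-2}>0$, so the fraction multiplying $\mu_{k^2}$ in $\theta_{k+1}$ is finite rather than indeterminate, is a point of rigor the paper glosses over but does not change the substance of the argument.
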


Proposition \ref{pro01} reveals that when the smallest operator Schmidt coefficient $\mu_{k^2}$ approaches 0, both the optimal witness coefficient $\lambda_{k+1}$ and the four upper bounds $\theta_{k+1}, \zeta_{k+1}, \eta_{k+1},P_{k+1}$ exhibit close proximity. Moreover, in the scenario where $\mu_{k^2}=0$, the optimal witness coefficient aligns precisely with all four upper bounds.
%\subsection{Characterizing Schmidt number witnesses for bipartite quantum states in four dimension}
%
%In general, the precise mathematical expression for the coefficient $\lambda_4$ in the presence of the operator Schmidt coefficients $\mu_1, \ldots, \mu_{9}$ eludes us.
%It is worth noting that the matrix set $\{M_i\}$ comprises solely two matrices for $k=4$.
%Consequently, we contemplate employing the nonnegative matrix spectral radius theory to establish  upper bounds for the coefficient $\lambda_4$ in Theorem \ref{thmetamn}, \ref{thmzetamn} and \ref{thmthetamn}, followed by constructing the corresponding Schmidt number witness $\mathcal{W}_{4 \mathrm{S}}$.

%Firstly, we investigate a quantum state $\rho_0$ shown in \cite{PhysRevLett.124.200502,PRL24OSD} whose Schmidt number can not be certified with fidelity-based witnesses, but with OSD witnesses.
Firstly, we examine the quantum state $\rho_0$ as presented in \cite{PhysRevLett.124.200502,PRL24OSD}. The Schmidt number of this state cannot be certified using fidelity-based witnesses, but it can be verified with OSD witnesses.

\begin{example}
We consider the mixed two-ququad state
$$\rho_0=\frac{1}{2}|\phi^{+}_{3}\rangle \langle \phi^{+}_{3}|+\frac{1}{4}(|23\rangle+|32\rangle)(\langle 23|+\langle 32|),$$
with $|\phi^{+}_{3}\rangle=\frac{1}{\sqrt{3}}(|00\rangle+|11\rangle+|22\rangle).$
\end{example}

The state under consideration is characterized by Schmidt number three. In this scenario, fidelity-based Schmidt witnesses exclusively identify Schmidt number two \cite{PhysRevLett.124.200502}, serving as an indicator for entanglement verification.

Based on the operator Schmidt decomposition (OSD), there are the following operator Schmidt coefficients(OSC):
$\mu_1=0.3469, \mu_2=\mu_3=0.2500, \mu_4=0.1802, \mu_5=\cdots=\mu_{12}=0.1667,\mu_{13}=\cdots=\mu_{16}=0.$

Therefore, based on Proposition \ref{pro01}, there are
\begin{equation*}
  \lambda_{5}(\rho_0)=0.6848 \leq \theta_{5}(\rho_0)= \zeta_{5}(\rho_0)=\eta_{5}(\rho_0)=0.9438.
\end{equation*}
In fact, the optimal Schmidt number witness are as following:
\begin{equation*}
  \mathcal{SW}_{\rho_0 }= 0.6848*\mathbb{I}-\rho_0.
\end{equation*}

Generally, accurately determining the maximum value of spectral radii for a set of nonnegative matrices is a challenging and expensive task. Conversely, establishing the upper bounds is more straightforward. For instance, in Example 1, we must compute the spectral radii of 12 nonnegative matrices. The specific matrix arrangement is documented in Appendix D.

\begin{example}
We consider the mixed two-qudit state
\begin{align*}
  \rho_k= & \frac{1}{2}|\phi^{+}\rangle \langle \phi^{+}|+\frac{1}{4}\big(|k-2, k-1\rangle \\
   & +|k-1,k-2\rangle\big)\big(\langle k-2, k-1|+\langle k-1,k-2|\big),
\end{align*}
%$\rho_k=\frac{1}{2}|\phi^{+}\rangle \langle \phi^{+}|+\frac{1}{4}(|k-2, k-1\rangle+|k-1,k-2\rangle)(\langle k-2, k-1|+\langle k-1,k-2|)$
with $|\phi^{+}\rangle=\frac{1}{\sqrt{k}}\sum\limits_{i=0}^{k-1}|ii\rangle.$
\end{example}

Based on the Operator Schmidt decomposition (OSD), If $k=2$, all the Operator Schmidt coefficients(OSC) are:
$\mu_1=0.2571, \mu_2=0.2517,  \mu_3=0.2473, \mu_4=0.2439.$

If $k=3$, all the Operator Schmidt coefficients(OSC) are:
$\mu_1=\mu_2=0.4167,  \mu_3=\mu_4=\mu_5=\mu_6=\mu_7=0.1667,  \mu_8=\mu_9=0.0833.$

If $k=4$, all the Operator Schmidt coefficients(OSC) are:
$\mu_1=\mu_2=0.3750,   \mu_3=\cdots=\mu_{16}=0.1250.$

If $k=5$, all the Operator Schmidt coefficients(OSC) are:
$\mu_1=\mu_2=0.3500,  \mu_3=\mu_4=0.1500,  \mu_5=\cdots=\mu_{25}=0.1000.$

Table 1 shows the coefficient $\lambda_{k+1}$ and upper bounds $\theta_{k+1},\zeta_{k+1}, \eta_{k+1}, P_{k+1}$ for the construction of Schmidt number witnesses for $\rho_k$ in Example 1.
It is not difficult to find that $\rho_k$ is a mixed state with fixed Schmidt number $k$.
In general, minimizing over pure states with a fixed Schmidt rank continues to pose a significant challenge, often lacking analytically tractable solutions in numerous instances \cite{PhysRevA.63.050301,PRA23NW}.
Our findings, rooted in Operator Schmidt decomposition and nonnegative matrix spectral theory, enable the establishment of an upper bound on the optimal witness coefficients, facilitating the construction of the associated Schmidt number witness.

\begin{table}[h]\label{table}
\caption{The coefficient $\lambda_{k+1}$ and upper bounds $\theta_{k+1},\zeta_{k+1}, \eta_{k+1}, P_{k+1}$ for construction of Schmidt number witnesses for $\rho_k$.}
%\begin{center}
\begin{footnotesize}
\begin{tabular}{|c|c|c|c|c|c|}
  \hline
   Mixed states \qquad&  $\lambda_{k+1}$ \qquad&  $\theta_{k+1}$ \qquad &  $\zeta_{k+1}$ \qquad &  $\eta_{k+1}$ \qquad&  $P_{k+1}$\qquad\\
  % after \\: \hline or \cline{col1-col2} \cline{col3-col4} ...
  \hline
  $\rho_{2}, (k=2)$   & 0.5001 & 0.5002 & 0.5006 & 0.5045 & 0.5088\\
  \hline
  $\rho_{3}, (k=3)$   & 0.6642 & 0.9420 & 0.9603 & 0.9649 & 1.0001\\
  \hline
  $\rho_{4}, (k=4)$  & 0.6545 & 0.9330 &0.9568& 0.9634 & 1.0000\\
  \hline
  $\rho_{5}, (k=5)$  & 0.6628 & 0.9447 &0.9649& 0.9690 & 1.0500\\
  \hline
\end{tabular}
\end{footnotesize}
%\end{center}
\end{table}

%\begin{example}
%We consider the mixed two-ququad state
%$\rho_0=\frac{1}{2}|\phi\rangle \langle \phi|+\frac{1}{4}(|23\rangle+|32\rangle)(\langle 23|+\langle 32|)$
%with $|\phi\rangle=\frac{1}{\sqrt{3}}(|00\rangle+|11\rangle+|22\rangle).$
%\end{example}
%Based on the Operator Schmidt decomposition (OSD), there are following the operator Schmidt coefficients(OSC):
%$\mu_1=0.3469, \mu_2=\mu_3=0.2500, \mu_4=0.1802, \mu_5=\cdots=\mu_{12}=0.1667,\mu_{13}=\cdots=\mu_{16}=0.$
%
%
%\begin{table}[h]\label{table}
%\caption{Construction of Schmidt number witnesses for Example 2 and randomly generated mixed states with some specific Schmidt rank }
%\begin{center}
%\begin{tabular}{|c|c|c|c|c|c|}
%  \hline
%   Mixed states \qquad&  $\lambda_{k+1}$ \qquad& $\theta_{k+1}$ \qquad &  $\zeta_{k+1}$ \qquad &  $\eta_{k+1}$ \qquad& $P_{k+1}$\qquad\\
%  \hline
%   $\rho_{0}, (k=3)$  & 0.6848 & 0.9438 &0.9438& 0.9438 & 0.9438\\
%  \hline
%  $\rho_{4}, (k=5)$, $SR=9$  & 0.4041 & 0.7250 &0.7250& 0.7250 & 0.7250\\
%  \hline
%  $\rho_{5}, (k=5)$, $SR=16$  & 0.3312 & 0.5960 &0.5960& 0.5960 & 0.5960\\
%  \hline
%\end{tabular}
%\end{center}
%\end{table}

We investigate different mixed states by creating 50 instances using MATLAB. Each mixed state is composed of 10,000 convex combinations of pure states, which avoids the situation where the four upper bounds overlap as in Proposition \ref{pro01}. Our analysis covers scenarios with Schmidt number $k$ values of 2, 3, 4, and 5, that is, the arbitrary dimensional bipartite systems are Schmidt number 2, 3, 4, and 5 respectively. The detailed numerical results are shown in Figure 1. It can be seen from Figure 1 that the four upper bounds $\theta_{k+1},\zeta_{k+1}, \eta_{k+1}, P_{k+1}$ gradually approach each other as the system dimension increases.
Moreover, increasing the number of pure states of mixed convex combinations does not change this phenomenon.
In addition, Figure 1 shows that the optimal witness coefficient $\lambda_{k+1}$ seems to converge to $\frac{1}{k}$, which means we can construct the following Schmidt number witness:
\begin{equation}\label{figwit}
  \mathcal{SW}_{(k+1) }=(\frac{1}{k}+\delta) \mathbb{I}-X, \quad  \delta>0.
\end{equation}
In other words, when the quantity of composed mixed states significantly surpasses the system's dimension, a randomly generated mixed state approximates the maximum mixed state. Consequently, the coefficient observed via the optimal Schmidt number converges towards $\frac{1}{k}$.

%It is noting that if $X=\frac{1}{\sqrt{k}}\sum\limits_{i=0}^{k-1}|ii\rangle$, the OSC are 1,0($k^2-1)$ in our observable.

\begin{figure}[ht]%\label{Figure1}
    \centering
    \begin{subfigure}{0.48\linewidth}
        \centering
        \includegraphics[width=\linewidth]{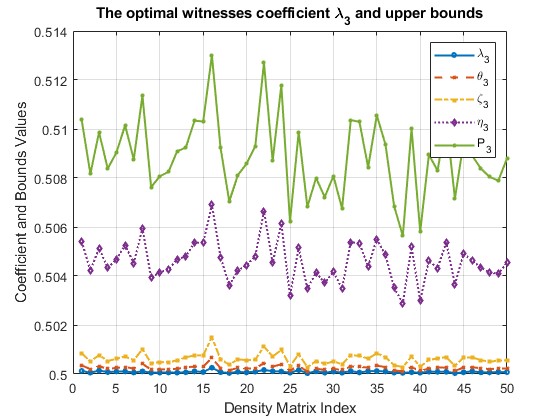}
        \caption{The witnesses coefficient $\lambda_{3}$ and upper bounds}
    \end{subfigure}
    \hfill
    \begin{subfigure}{0.48\linewidth}
        \centering
        \includegraphics[width=\linewidth]{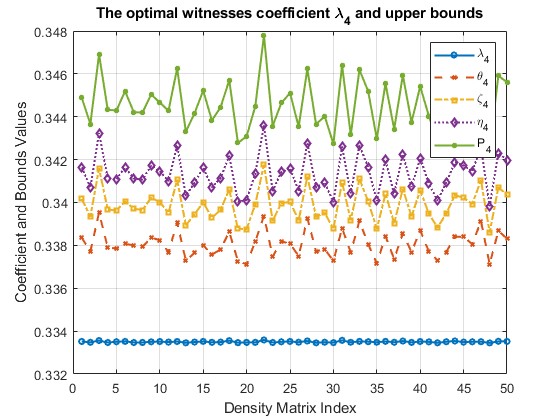}
        \caption{The witnesses coefficient $\lambda_{4}$ and upper bounds}
    \end{subfigure}\\
    \begin{subfigure}{0.48\linewidth}
        \centering
        \includegraphics[width=\linewidth]{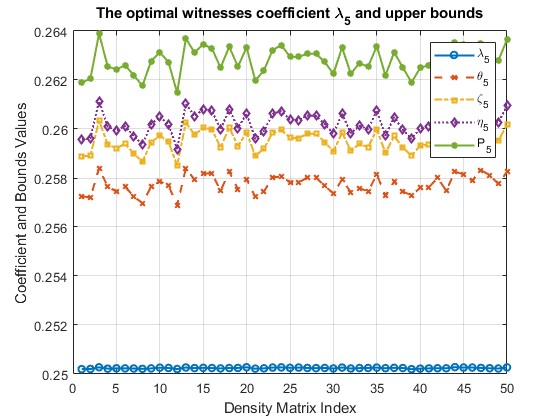}
        \caption{The witnesses coefficient $\lambda_{5}$ and upper bounds}
    \end{subfigure}
    \hfill
    \begin{subfigure}{0.48\linewidth}
        \centering
        \includegraphics[width=\linewidth]{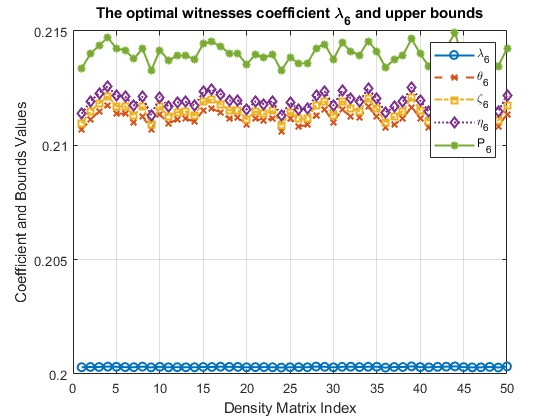}
        \caption{The witnesses coefficient $\lambda_{6}$ and upper bounds}
    \end{subfigure}
    \caption{The optimal witnesses coefficient $\lambda_{k+1}$ and upper bounds $\theta_{k+1},\zeta_{k+1}, \eta_{k+1}, P_{k+1}$ of the Schmidt number witnesses associated with a mixed state generated randomly in MATLAB are determined. The four images (a), (b), (c), and (d) depict 50 randomly generated mixed states for Schmidt number $k=2, 3, 4,$ and $5,$ respectively. Each mixed state is produced through 10,000 convex combinations of pure states.}
\end{figure}

%On the other hand, we consider Schmidt number witnesses for higher Schmidt number, such as higher dimensional bipartite  systems system with Schmidt number 10. The results of  correlation analysis with deferent number combinations of pure states  can be compared in Figure 2. It is apparent from (a) and (b) in Figure 2 that if the quantity of pure states composing the mixed state is equal to or below the square of the system dimension, the four upper bounds $\theta_{11},\zeta_{11}, \eta_{11}, P_{11}$ will exhibit consistency, corroborated by the findings in Proposition \ref{pro01}. If the number of pure states constituting the mixed state significantly exceeds the square of the system dimension, such as (c) (d) in Figure 2, the four upper bounds $\theta_{11},\zeta_{11}, \eta_{11}, P_{11}$ will sequentially emerge, nearing the optimal witness coefficient $\lambda_{11}$ at this juncture. In this situation, descriptions in (c) and (d) in Figure 2 seem to verify the feasibility of Schmidt number witness in \eqref{figwit}.

On the other hand, we consider Schmidt number witnesses for higher Schmidt numbers, such as higher-dimensional bipartite systems with a Schmidt number of 10. The results of the correlation analysis with different combinations of pure states are presented in Figure 2. It is evident from Figures 2(a) and 2(b) that if the number of pure states composing the mixed state is equal to or less than the square of the system dimension, the four upper bounds $\theta_{11}, \zeta_{11}, \eta_{11}, P_{11}$ exhibit consistency, as corroborated by Proposition \ref{pro01}. However, if the number of pure states constituting the mixed state significantly exceeds the square of the system dimension, as shown in Figures 2(c) and 2(d), the four upper bounds $\theta_{11}, \zeta_{11}, \eta_{11}, P_{11}$ sequentially approach the optimal witness coefficient $\lambda_{11}$. In this context, the descriptions in Figures 2(c) and 2(d) appear to verify the feasibility of the Schmidt number witness in \eqref{figwit}.

\begin{figure}[ht]
    \centering
    \begin{subfigure}{0.48\linewidth}
        \centering
        \includegraphics[width=\linewidth]{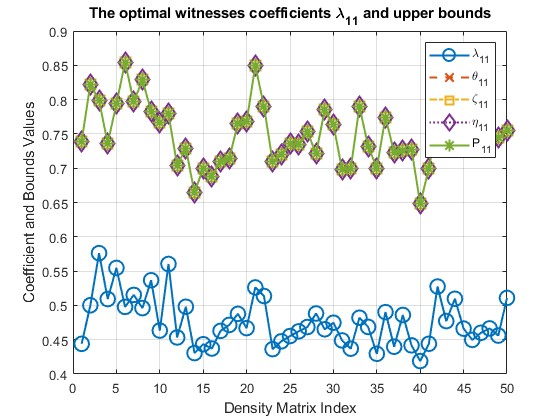}
        \caption{The witness coefficients $\lambda_{11}$ and upper bounds of the mixed state are derived from combinations of 5 pure states.}
    \end{subfigure}
    \hfill
    \begin{subfigure}{0.48\linewidth}
        \centering
        \includegraphics[width=\linewidth]{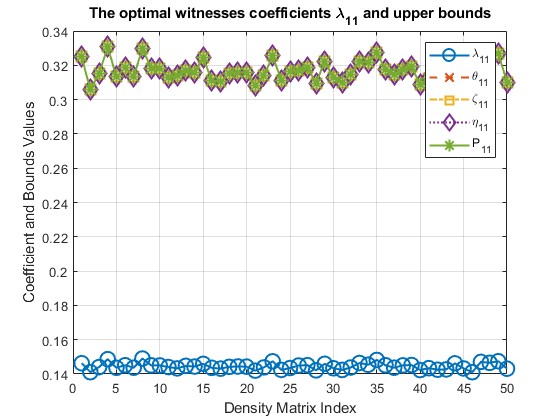}
        \caption{The witness coefficients $\lambda_{11}$ and upper bounds of the mixed state are derived from combinations of 100 pure states.}
    \end{subfigure}\\
    \begin{subfigure}{0.48\linewidth}
        \centering
        \includegraphics[width=\linewidth]{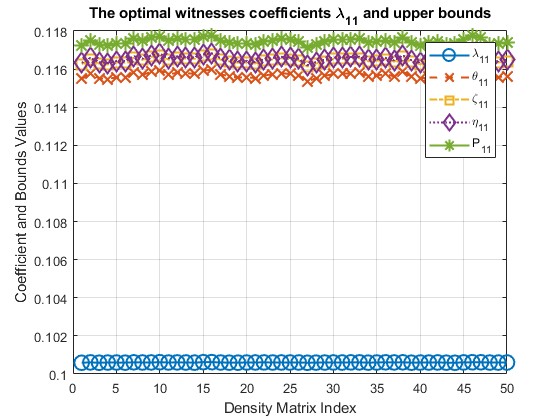}
        \caption{The witness coefficients $\lambda_{11}$ and upper bounds of the mixed state are derived from combinations of 10000 pure states.}
    \end{subfigure}
    \hfill
    \begin{subfigure}{0.48\linewidth}
        \centering
        \includegraphics[width=\linewidth]{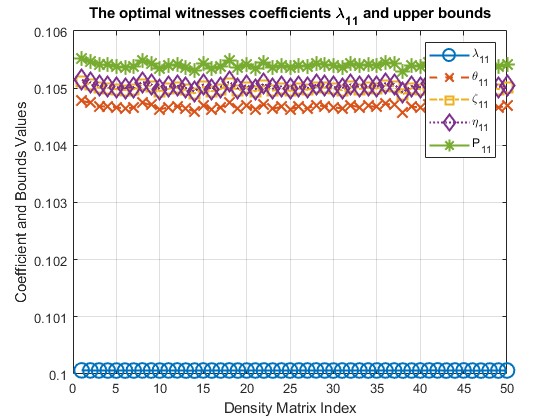}
        \caption{The witness coefficients $\lambda_{11}$ and upper bounds of the mixed state are derived from combinations of 100000 pure states.}
    \end{subfigure}
    \caption{The optimal witnesses coefficients $\lambda_{11}$ and upper bounds $\theta_{11}$, $\zeta_{11}$, $\eta_{11}$, and $P_{11}$ for mixed bipartite states with the Schmidt numbers $10$  are obtained through random generation using MATLAB. As can be seen from the Figure 2, the coefficients $\lambda_{11}$ and the upper bounds $\theta_{11}$, $\zeta_{11}$, $\eta_{11}$, and $P_{11}$ of 50 randomly generated mixed states. Descriptions in Figures 2(a), Figures 2(b), Figures 2(c), and Figures 2(d) correspond that each mixed state is created through convex combinations of 5, 100, 10000, and 100000 pure states, respectively.}
\end{figure}

%The optimal coefficients $\lambda_{11}$ and upper bounds $\theta_{11}$, $\zeta_{11}$, $\eta_{11}$, and $P_{11}$ of the Schmidt numbers corresponding to the $10\times 10$ mixed states randomly generated by MATLAB. (a), (b), (c), and (d) correspond to the descriptions of the coefficients $\lambda_{11}$ and upper bounds $\theta_{11}$, $\zeta_{11}$, $\eta_{11}$, $P_{11}$ of 50 randomly generated mixed states, respectively. Each mixed state is generated by 5, 50, 100, and 10,000 pure convex combinations.

\section{Conclusion}
Understanding quantum entanglement is pivotal for the advancement of quantum technologies. Evaluating entanglement levels involves counting the entangled degrees of freedom, leading to the derivation of the Schmidt number.
This study introduces efficient analytical tools for characterizing high Schmidt number witnesses in bipartite quantum states across various dimensions. Our approach not only provides practical mathematical techniques for crafting high-dimensional Schmidt number witnesses but also streamlines entanglement and dimensionality quantification. Noteworthy is our development of high-dimensional Schmidt number witnesses in arbitrary dimensional bipartite quantum systems, relying solely on operator Schmidt coefficients. We showcase our theoretical progress and computational superiority by constructing Schmidt number witnesses in bipartite quantum systems of dimensions four and five.

%We have developed an innovative mathematical method to characterize high Schmidt number witnesses in arbitrary dimensional bipartite quantum systems. These results enable the construction of Schmidt number witnesses in dimensions 4 and higher. The coefficients of these witnesses are determined solely by the maximum value of a multivariate quadratic polynomial based on the operator Schmidt coefficients, specifically the spectral radius of the set of nonnegative matrices formed by these coefficients. Through the analysis of these nonnegative matrices, we establish the upper bounds for their spectral radius, facilitating the creation of the corresponding Schmidt number witnesses. Our approach not only offers a theoretically plausible mathematical technique for building high-dimensional Schmidt number witnesses but also streamlines entanglement quantization and dimensionality assessment.

\begin{acknowledgments}
Research of Sze was supported by a HK RGC grant PolyU 15300121 and a PolyU research grant 4-ZZRN. The HK RGC grant also supported the post-doctoral fellowship of Xiong at the Hong Kong Polytechnic University.
\end{acknowledgments}

%\section*{References}

\bibliography{SNaps}% Produces the bibliography via BibTeX.

%\appendix

\section{Appendix}

In the Appendix, we will feature the proofs of Theorem \ref{thmthetamn}, Proposition \ref{thmetamn} and Proposition \ref{thmzetamn}, along with the presentation of our findings in arbitrary dimensional bipartite quantum system with Schmidt number four and five.

\begin{widetext}

\subsection{Proof of Theorem \ref{thmthetamn}}
From Theorem \ref{BiOSDSNW}, equation \eqref{eq017} ensures $\mathcal{W}_{k+1}$'s expectation positivity on states with Schmidt number $k$. Subsequently, we analyze \eqref{eq016} to determine coefficients $\lambda_{k+1}$:
\begin{equation*}%\label{eq018}
\max _{\left|\Psi^{k}\right\rangle \in S_{k}}\left\langle\Psi^{k}|X| \Psi^{k}\right\rangle=\max _{\left|\Psi^{k}\right\rangle \in S_{k}} \operatorname{Tr}\left(X\left|\Psi^{k}\right\rangle\left\langle\Psi^{k}\right|\right) .
\end{equation*}
By decomposing $X$ and $\left|\Psi^{k}\right\rangle\left\langle\Psi^{k}\right|$, we find:
\begin{equation*}%\label{eq019}
\max_{\bar{s}_l}\operatorname{Tr}\left(\sum_j \mu_j G_j^A \otimes G_j^B \sum_l \tilde{s}_l H_l^A \otimes H_l^B\right) \leq \max_{\bar{s}} \sum_{j,l} \mu_j \tilde{s}_{l\downarrow}.
\end{equation*}
Here, $\tilde{s}_{l\downarrow}$ are the decreasingly ordered operator Schmidt coefficients of $\left|\Psi^{k}\right\rangle\left\langle\Psi^{k}\right|$, given by the products of vector Schmidt coefficients $s_i$ of $\left|\Psi^{k}\right\rangle = \sum_{i=1}^{k} s_i|\alpha_i \beta_i\rangle$.
Furthermore, the coefficients $s_{1},\cdots, s_{k}$ obey the constraints $s_{1}^2+\cdots+ s_{k}^2=1$ and $s_{1}\geq  \cdots \geq  s_{k}\geq 0$.

Following a procedure akin to that of $k=3,4$, the Operator Schmidt decomposition (OSD) for the projector $\left|\psi_k\right\rangle\left\langle\psi_k\right|$ is defined by the set
\begin{equation*}
  \left\{\tilde{\mu}_j\right\}=\left\{s_1 s_1, \cdots, s_1 s_k, s_2^2, \cdots, s_{k-1} s_k, s_k^2\right\},
\end{equation*}
For the function $f_{i}\left(s_1, \cdots, s_k\right)=\sum_{i j} \mu_i \tilde{\mu}_j$ to be derived, it is necessary to arrange the OSD coefficients $\tilde{\mu}_j$ in descending order.
The coefficient $\lambda_{k+1}$ can be obtained as the maximum of the following functions:
\begin{equation*}%\label{eqfk2}
  \max_{s_1, \cdots, s_k}\left\{f_i\left(s_1, \cdots, s_k\right)\right\},
\end{equation*}
where $i$ depends on the number of arrangements for $\{\tilde{\mu}_j\}$.

There are always such orderings for $\{s_{i} s_{j}\}_{1\le i,j\le k}$:
\begin{equation*}
  s_{1}^{2}\geq s_{1} s_{2}=s_{2} s_{1}\geq \cdots\geq  s_{1}s_{m} \geq  s_{2}^{2}\geq s_{2}s_{3}\geq  \cdots\geq s_{k}^{2},
\end{equation*}
there are corresponding $k$ order square matrix $M_u^{k}$ as following:
\begin{equation*}
M_u^{k}=
  \begin{pmatrix}
   \mu_1 & \mu_2 & \cdots & \mu_{2k-4}&\mu_{2k-2}\\
    \mu_3 & \mu_{2k} & \cdots & \mu_{4k-5} &\mu_{4k-3} \\
    \vdots & \vdots & \vdots & \vdots &\vdots\\
    \mu_{2k-3}& \mu_{4k-6} & \cdots & \mu_{k^2-3} & \mu_{k^2-2} &\\
    \mu_{2k-1} & \mu_{4k-4} & \cdots & \mu_{k^2-1} &\mu_{k^2}\\
  \end{pmatrix}.
\end{equation*}
It is not difficult to find that the first and k-th row of matrix $M_u^{k}$ exhibit the largest sum of the row values.

It is not difficult to find that all matrices $M_i^k$ are nonnegative matrices, based on \eqref{NMB4}, there are the following bounds:
\begin{equation*}
  \theta_{k+1}(M_i^k)= P_{k+1}(M_i^k)-\mu_{k^2}+\mu_{k^2}\frac{2(p_{k+1}(M_i^k)-\mu_{k^2})}{P_{k+1}(M_i^k)-2 \mu_{k^2}+\sqrt{(P_{k+1}(M_i^k))^2-4 \mu_{k^2}(P_{k+1}(M_i^k)-p_{k+1}(M_i^k))}}.
\end{equation*}

We define the binary function $g(P, p)$ as
\begin{equation*}
  g(P, p)=P-m+m\frac{2(p-m)}{P-2 m+\sqrt{P^2-4 m(P-p)}}
\end{equation*}
where $m=\mu_{k^2}$ denotes the smallest entry in the matrices, $P=P_{k+1}(M_i^k)$ and $p=p_{k+1}(M_i^k)$.  Furthermore, the partial derivatives are as follows:
\begin{small}
$$\frac{\partial g}{\partial P}= 1 - 2m(p - m)\frac{(P - 2m)}{\sqrt{P^2 - 4m(P - p)}} \frac{1}{\left(P - 2m + \sqrt{P^2 - 4m(P - p)}\right)^2}\geq 0,$$
$$\frac{\partial g}{\partial p}=  2m \left[(P - 2m + \sqrt{P^2 - 4m(P - p)}) - (p - m)\frac{2m}{\sqrt{P^2 - 4m(P - p)}}\right]  \frac{1}{\left(P - 2m + \sqrt{P^2 - 4m(P - p)}\right)^2}\geq 0,$$
\end{small}
which show that the binary function $g(P, p)$ exhibits increasing behavior concerning the variables $P$ and $p$.

It follows that
\begin{eqnarray*}
  \max\limits_{i}\{\theta_{k+1}(M_i^k)\} = \theta_{k+1}(M_u^{k})= \theta_{k+1}.
\end{eqnarray*}

Specifically, in the situation as following:
\begin{equation*}
  P=P_{k+1}(M_u^k)=\mu_1+\mu_2+\dots+\mu_{2k-2},\quad  p=p_{k+1}(M_u^k)=\mu_{2k-1}+\mu_{4k-4}+\cdots+\mu_{k^2-1}+\mu_{k^2},
\end{equation*}
the function $g(P, p)$ attains its maximum value.

Subsequently, we construct the following $(k+1)$-Schmidt witness:
\begin{equation*}
  \mathcal{SW}_{(k+1) }=\theta_{k+1} \mathbb{I}-X.
\end{equation*}
Consequently, the Theorem \ref{thmthetamn} can be drawn. The proof is completed.

\subsection{Proof of Proposition \ref{thmetamn} and Proposition \ref{thmzetamn}}

\subsubsection{Proof of Proposition \ref{thmetamn}}

Following a similar approach to the proof of Theorem \ref{thmthetamn}, we focus solely on the arrangement of OSD coefficients $\{\tilde{\mu}_j\}$ for $M_u^{k}$ that demonstrate the highest sum of values from the first and $k$-th rows of the matrix $M_u^{k}$.

It is not difficult to find that all matrices $M_i^k$ are nonnegative matrices, based on \eqref{NMB2}, there are the following bounds:
$$ \rho(M_i^k) \leq \eta_{k+1}(M_i^k)= P(M_i^k)-m(M_i^k)(1-\sqrt{\delta(M_i^k)}),$$
where $m(M_i^k)=\min\limits _{r, s}\left((M_i^k)_{r s}\right)$ and $\delta(M_i^k)=\max\limits _{P_r(M_i^k)<P_s(M_i^k)}\left(\frac{P_r(M_i^k)}{P_s(M_i^k)}\right)$.

Consequently, we have
\begin{equation*}
  \eta_{k+1}(M_u^{k})=  \mu_1+\mu_2+\dots+\mu_{2k-2}-\mu_{k^2}\left(1-\sqrt{\frac{\mu_{2k-1}+\mu_{4k-4}+\cdots+\mu_{k^2-1}+\mu_{k^2}}{\mu_1+\mu_2+\dots+\mu_{2k-2}}}\right).
\end{equation*}
%and
%\begin{equation*}
%  \eta_{k+1}(M_v^{k})=  \mu_1+\mu_2+\dots+\mu_{(k-2)^2+1}+\mu_{(k-1)^2+1}-\mu_{k^2}\left(1-\sqrt{\frac{\mu_{(k-1)^2+2} + \mu_{(k-1)^2+4} + \cdots + \mu_{k^2-1} +\mu_{k^2}}{\mu_1+\mu_2+\dots+\mu_{(k-1)^2+1}}}\right).
%\end{equation*}

%Following a similar approach as for $k=4,5$,
%We define the function
%$$g(x)=a+x-\mu_{k^2}+(\frac{b\mu_{k^2}^2}{a+x})^{\frac{1}{2}},\quad  x\in [\mu_{(k-1)^2+1}, \mu_{2k-2}],$$
%where $a+x=P_s(M_i^k)$ and $b=P_r(M_i^k)$. Consequently, there are
%$$g'(x)=1-(\frac{P_s(M_i^k)}{P_r(M_i^k)\mu_{k^2}^2})^{\frac{1}{2}}=1-\frac{1}{\mu_{k^2}}\sqrt{\frac{P_s(M_i^k)}{P_r(M_i^k)}}<0.$$

We define the function
$$g(P,p)=P-\mu_{k^2}(1-(\frac{p}{P})^{\frac{1}{2}}),$$
where $P=P_{k+1}(M_i^k)$ and $p=p_{k+1}(M_i^k)$. Consequently, there are
$$\frac{\partial g}{\partial P}=1-\frac{\mu_{k^2}}{2P}\sqrt{\frac{p}{P}}>0,\quad \frac{\partial g}{\partial p}=\frac{\mu_{k^2}}{2P}\frac{1}{\sqrt{Pp}}>0.$$

Because the function $g(x)$ is non-decreasing and the operator Schmidt coefficients $\mu_i$ are arranged in descending order, we only need to analyze the two extreme cases of the upper bounds of the spectral radii, namely, those of matrices $M_u^{k}$ and $M_v^{k}$.
It follows that
\begin{eqnarray*}
  \eta_{k+1}(M_u^{k}) & & =  \mu_1+\mu_2+\dots+\mu_{2k-2}-\mu_{k^2}\left(1-\sqrt{\frac{\mu_{2k-1}+\mu_{4k-4}+\cdots+\mu_{k^2-1}+\mu_{k^2}}{\mu_1+\mu_2+\dots+\mu_{2k-2}}}\right)\\
   %& & \leq \mu_1+\mu_2+\dots+\mu_{2k-2}-\mu_{k^2}\left(1-\sqrt{\frac{\mu_{2k-1}+\mu_{4k-4}+\cdots+\mu_{k^2-1}+\mu_{k^2}}{\mu_1+\mu_2+\dots+\mu_{2k-2}}}\right)\\
   &&= \eta_{k+1},
\end{eqnarray*}
%and
%\begin{eqnarray*}
%  \eta_{k+1}(M_v^{k}) & &=  \mu_1+\mu_2+\dots+\mu_{2k-2}-\mu_{k^2}\left(1-\sqrt{\frac{\mu_{(k-1)^2+2} + \mu_{(k-1)^2+4} + \cdots  +\mu_{k^2}}{\mu_1+\mu_2+\dots+\mu_{(k-1)^2+1}}}\right)\\
%   & & \leq \mu_1+\mu_2+\dots+\mu_{2k-2}-\mu_{k^2}\left(1-\sqrt{\frac{\mu_{2k-1}+\mu_{4k-4}+\cdots+\mu_{k^2-1}+\mu_{k^2}}{\mu_1+\mu_2+\dots+\mu_{(k-1)^2+1}}}\right)\\
%   &&=\eta_{k+1}.
%\end{eqnarray*}

It follows that
\begin{eqnarray*}
  \max\limits_{i}\eta_{k+1}(M_i^k) = \max\{\eta_{k+1}(M_u^{k})\}= \eta_{k+1}.
\end{eqnarray*}

In other words, the function $g(P, p)$ attains its maximum value in case of
\begin{equation*}
  P=P_{k+1}(M_u^k),\quad  p=p_{k+1}(M_u^k),
\end{equation*}

Subsequently, we construct the following $(k+1)$-Schmidt witness:
\begin{equation*}
  \mathcal{SW}_{(k+1) }=\eta_{k+1} \mathbb{I}-X.
\end{equation*}
Consequently, the Proposition \ref{thmetamn} can be drawn. The proof is completed.

%Therefore, there are $\lambda_{k+1}\leq \eta_{k+1} \leq  \mu_1+\mu_2+\dots+\mu_{2k-2}$ and we construct the $(k+1)$-Schmidt witness $\mathcal{SW}_{(k+1)}=\eta_{k+1} \mathbb{I}-X$.
% Thus, the proof is completed. %$\Box$

\subsubsection{Proof of Proposition \ref{thmzetamn}}

Following a similar approach to the proof of Proposition \ref{thmetamn}, invoking inequality \eqref{NMB3} allows us to expeditiously present the ensuing conclusion:
\begin{equation*}
  \max\limits_{i}\zeta_{k+1}(M_i^k) = \max\{\zeta_{k+1}(M_u^{k})\}= \zeta_{k+1}.
\end{equation*}

In other words, the function $g(P, p)$ attains its maximum value in case of
\begin{equation*}
  P=P_{k+1}(M_u^k),\quad  p=p_{k+1}(M_u^k),
\end{equation*}

Subsequently, we construct the following $(k+1)$-Schmidt witness:
\begin{equation*}
  \mathcal{SW}_{(k+1) }=\zeta_{k+1} \mathbb{I}-X.
\end{equation*}
Consequently, the Proposition \ref{thmzetamn} can be drawn. The proof is completed.

It is noting that
\begin{equation*}
  \frac{\mu_{2k-1}+\mu_{4k-4}+\cdots+\mu_{k^2-1}}{\mu_1+\mu_2+\dots+\mu_{(k-1)^2+1}-\mu_{k^2}}\leq \frac{\mu_{2k-1}+\mu_{4k-4}+\cdots+\mu_{k^2-1}+\mu_{k^2}}{\mu_1+\mu_2+\dots+\mu_{(k-1)^2+1}},
\end{equation*}
which implies that the bound for $\zeta_{k+1}$  is tighter than bound for $\eta_{k+1}$.

\subsection{Characterizing Schmidt number witnesses for arbitrary dimensional bipartite quantum states in Schmidt number four}

In general, the precise mathematical expression for the coefficient $\lambda_4$ in the presence of the operator Schmidt coefficients $\mu_1, \ldots, \mu_{9}$ eludes us.
It is worth noting that the matrix set $\{M_i\}$ comprises solely two matrices for $k=4$.
Consequently, we contemplate employing the nonnegative matrix spectral radius theory to establish  upper bounds for the coefficient $\lambda_4$ in Remark \ref{thmetamn}, \ref{thmzetamn} and Theorem \ref{thmthetamn}, followed by constructing the corresponding Schmidt number witness $\mathcal{SW}_{4}$.

\begin{corollary}\label{corSN41}
Consider a hermitian operator $X$ in $\mathcal{L}(\mathcal{H}_m) \otimes \mathcal{L}(\mathcal{H}_n)$  $(n\geq m\geq 4)$,  the optimal $4$-Schmidt witness based on the OSD is $\mathcal{SW}_{4 }=\lambda_{4} \mathbb{I}-X$, with $\lambda_{4} = \max _{\left|\Psi^{3}\right\rangle \in S_{3}}\left\langle\Psi^{3}|X| \Psi^{3}\right\rangle$ and $S_{3}$ denotes the states set with Schmidt number $3$. We define the $\theta_{4}$ as following:
\begin{equation*}
\theta_4 =\mu_1+\mu_2+\mu_4+\mu_9-\left(\frac{2\mu_9(\mu_5+\mu_8)}{\mu_1+\mu_2+\mu_4-2\mu_9+\sqrt{(\mu_1+\mu_2+\mu_4)^2-4\mu_9(\mu_1+\mu_2+\mu_4-\mu_5-\mu_8-\mu_9)}}\right),
\end{equation*}
where $\mu_i, i=1,\cdots,9$ are Operator Schmidt coefficients of $X$,  $P_{4}=\mu_1+\mu_2+\mu_4$ and $p_{4}=\mu_{5}+\mu_{8}+\mu_{9}$,
Notably,
\begin{equation*}
  \lambda_4\leq \theta_4.
\end{equation*}
Subsequently, we construct the following $4$-Schmidt witness:
\begin{equation*}
  \mathcal{SW}_{4}=\theta_4 \mathbb{I}-X.
\end{equation*}
\end{corollary}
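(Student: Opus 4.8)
The plan is to run the proof of Theorem \ref{thmthetamn} specialized to $k=3$, which is exactly the situation in which the projector onto a Schmidt-rank-$3$ vector generates only the two orderings recorded in \eqref{matricesM42}. First I would invoke Theorem \ref{BiOSDSNW} to write the optimal coefficient as $\lambda_4=\max_{|\Psi^3\rangle\in S_3}\langle\Psi^3|X|\Psi^3\rangle$, expand both $X$ and $|\Psi^3\rangle\langle\Psi^3|$ in their operator Schmidt decompositions, and apply the rearrangement bound \eqref{eq019} to obtain $\lambda_4\le\max_{s}\sum_j\mu_j\tilde s_{j\downarrow}$, where $s=(s_1,s_2,s_3)$ ranges over Schmidt vectors with $s_1\ge s_2\ge s_3\ge 0$ and $s_1^2+s_2^2+s_3^2=1$, and the $\tilde s_{j\downarrow}$ are the nine products $\{s_a s_b\}$ sorted decreasingly.

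Next I would observe that, once a descending order of the products $\{s_a s_b\}$ is fixed, the sum $\sum_j\mu_j\tilde s_{j\downarrow}$ becomes a quadratic form $s^\top M_i^3 s$ in a nonnegative matrix $M_i^3$ whose entries are the ordered OSC $\mu_1,\dots,\mu_9$ of $X$; for $k=3$ the only admissible arrangements are $M_1^3$ and $M_2^3$ of \eqref{matricesM42}. Since each $M_i^3$ is nonnegative, I would then apply Brauer's inequality \eqref{NMB4} to bound the associated spectral quantity, producing $\theta_4(M_i^3)$. Finally, using the monotonicity of the binary function $g(P,p)$ in both $P$ and $p$ (the partial-derivative computation already carried out in the proof of Theorem \ref{thmthetamn}), I would show that $\max_i\theta_4(M_i^3)$ is attained at the matrix whose first and last rows carry the largest row sums, namely $M_1^3$, for which $P_4=\mu_1+\mu_2+\mu_4$, $p_4=\mu_5+\mu_8+\mu_9$, and the minimal entry is $m=\mu_9$. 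Substituting these three quantities into \eqref{NMB4} yields $\theta_4$ and the desired inequality $\lambda_4\le\theta_4$, so that $\mathcal{SW}_4=\theta_4\mathbb{I}-X$ is a valid $4$-Schmidt witness by \eqref{eq016} and \eqref{eq017}.

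The hard part will be the passage in the second step from the constrained quadratic maximum $\max_{s\ge 0,\,\|s\|=1}s^\top M_i^3 s$ to a quantity controlled by the nonnegative-matrix spectral-radius bounds \eqref{NMB1}--\eqref{NMB4}, because this maximum is the largest eigenvalue of the symmetrization of $M_i^3$ rather than its Perron root; one must check that Brauer's bound built from the row data $P_4,p_4,\mu_9$ of $M_i^3$ still dominates it. The remaining work, verifying that only two orderings occur for $k=3$ and that $M_1^3$ rather than $M_2^3$ maximizes $g$, is routine given the decreasing arrangement of the $\mu_i$ and the monotonicity already established for Theorem \ref{thmthetamn}.
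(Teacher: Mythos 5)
Your outline retraces the paper's own route exactly: the corollary is Theorem \ref{thmthetamn} specialized to $k=3$, with the reduction through Theorem \ref{BiOSDSNW} and \eqref{eq019}, the two admissible orderings giving the matrices $M_1^3,M_2^3$ of \eqref{matricesM42}, Brauer's inequality \eqref{NMB4}, and the monotonicity of $g(P,p)$ to single out $M_1^3$ with row data $P_4=\mu_1+\mu_2+\mu_4$, $p_4=\mu_5+\mu_8+\mu_9$, $m=\mu_9$. The comparison step you call routine is indeed what the Appendix A computation delivers, since $P(M_2^3)=\mu_1+\mu_2+\mu_5\le P(M_1^3)$ and $p(M_2^3)=\mu_6+\mu_8+\mu_9\le p(M_1^3)$, so $g$ increasing in both arguments favors $M_1^3$.

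However, the step you flag as ``the hard part'' is a genuine gap, and you should know that the paper's proof does not close it either --- it simply applies \eqref{NMB4} to the non-symmetric matrices as if that settled the matter. The quantity to be bounded is
\begin{equation*}
\lambda_4=\max_i\ \max_{s\ge 0,\,\|s\|=1} s^{\top} M_i^3\, s=\max_i\ \operatorname{maxEig}\bigl((M_i^3)^{\mathrm{symm}}\bigr),
\end{equation*}
whereas \eqref{NMB1}--\eqref{NMB4} bound the Perron root $\rho(M_i^3)$. For a nonnegative non-symmetric matrix $A$ one has $\operatorname{maxEig}(A^{\mathrm{symm}})\ge\rho(A)$ in general, with strict inequality possible (e.g. $A=\left(\begin{smallmatrix}0&1\\0&0\end{smallmatrix}\right)$ has $\rho(A)=0$ but $\operatorname{maxEig}(A^{\mathrm{symm}})=\tfrac12$), so a row-sum bound on $\rho(M_i^3)$ does not by itself bound $\lambda_4$. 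The natural repair is to apply the spectral-radius bounds to the symmetrized matrices themselves, which is legitimate because they are nonnegative and symmetric, so spectral radius equals largest eigenvalue; but then the row data change. For $(M_1^3)^{\mathrm{symm}}$ the largest row sum is $\mu_1+\tfrac12(\mu_2+\mu_3+\mu_4+\mu_5)\le P_4$, which does rescue the weak bound $\lambda_4\le P_4$, but the smallest row sum becomes $\mu_9+\tfrac12(\mu_4+\mu_5+\mu_7+\mu_8)\ge p_4$, and since $g(P,p)$ is increasing in \emph{both} arguments, the Brauer bound computed from the symmetrized data is not comparable to $\theta_4$ as defined in the statement. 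To complete the proof you must either show $\operatorname{maxEig}\bigl((M_i^3)^{\mathrm{symm}}\bigr)\le\theta_4$ directly, or redefine $\theta_4$ in terms of the symmetrized row sums; neither your proposal nor the paper's appendix does this.
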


\begin{corollary}
Using the same notations as in Corollary \ref{corSN41}. We define the $\eta_{4}$ as following:
\begin{equation*}
  \eta_4 =\mu_1+\mu_2+\mu_4-\mu_9\left(1-\sqrt{\frac{\mu_5+\mu_8+\mu_9}{\mu_1+\mu_2+\mu_4}}\right).
\end{equation*}
Notably, we have the following inequality:
\begin{equation*}
\lambda_4\leq \eta_4 .
\end{equation*}
Subsequently, we construct the following $4$-Schmidt witness:
\begin{equation*}
  \mathcal{SW}_{4}=\eta_4 \mathbb{I}-X.
\end{equation*}
\end{corollary}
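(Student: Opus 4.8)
The plan is to establish the stated $\eta_4$ bound as the Schmidt-number-four ($k=3$) instance of Proposition \ref{thmetamn}, reusing the reduction already set up for Corollary \ref{corSN41} but feeding the arrangement matrices into Ledermann's inequality \eqref{NMB2} rather than Brauer's. First I would recall from \eqref{eq017}--\eqref{eq019} that $\lambda_4=\max_{|\Psi^3\rangle\in S_3}\langle\Psi^3|X|\Psi^3\rangle$, write a Schmidt-rank-three target as $|\Psi^3\rangle=\sum_{i=1}^3 s_i|\alpha_i\beta_i\rangle$ with $s_1\ge s_2\ge s_3\ge 0$ and $s_1^2+s_2^2+s_3^2=1$, and note that the operator Schmidt coefficients of the projector $|\Psi^3\rangle\langle\Psi^3|$ are exactly the nine products $\{s_is_j\}_{1\le i,j\le3}$. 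By the rearrangement inequality the overlap is maximised by pairing the decreasingly ordered $\mu$'s with the decreasingly ordered products, giving $\langle\Psi^3|X|\Psi^3\rangle\le\sum_j\mu_j\tilde\mu_{j\downarrow}$.

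Next I would encode $\sum_j\mu_j\tilde\mu_{j\downarrow}$ as a quadratic form $\mathbf{s}^\top M_i\,\mathbf{s}$ with $\mathbf{s}=(s_1,s_2,s_3)^\top$, where $M_i$ is the nonnegative $3\times 3$ matrix placing $\mu_1,\dots,\mu_9$ in the cells indexed by the ordered products $s_as_b$. The chain $s_1^2\ge s_1s_2\ge\cdots\ge s_3^2$ is forced except for the relative order of $s_1s_3$ and $s_2^2$, so there are precisely the two admissible arrangements $M_1^3,M_2^3$ of \eqref{matricesM42}, and hence $\lambda_4=\max_i\max_{\mathbf s}\mathbf s^\top M_i\mathbf s$. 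Each $M_i$ is nonnegative with smallest entry $m=\mu_9$ sitting in the bottom-right cell; for the canonical arrangement $M_u^3=M_1^3$ the monotone ordering $\mu_1\ge\cdots\ge\mu_9$ makes the first row dominate entrywise, so the largest row sum is $P=\mu_1+\mu_2+\mu_4$ and the smallest is $p=\mu_5+\mu_8+\mu_9$. Applying the upper half of Ledermann's estimate \eqref{NMB2} with $\delta=p/P$ then yields the candidate values $\eta_4(M_i)=P(M_i)-\mu_9\bigl(1-\sqrt{p(M_i)/P(M_i)}\bigr)$.

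I would then show the bound is extremised by the canonical arrangement. Setting $g(P,p)=P-\mu_9\bigl(1-\sqrt{p/P}\bigr)$, differentiation gives $\partial g/\partial p=\mu_9/(2P\sqrt{Pp})>0$ and $\partial g/\partial P=1-\mu_9\sqrt{p}/(2P^{3/2})$; since $\mu_9\le\mu_4\le P$ and $p\le P$ we have $\mu_9\sqrt p\le P^{3/2}$, whence $\partial g/\partial P\ge\tfrac12>0$. Thus $g$ increases in both arguments, and because $M_u^3=M_1^3$ is the arrangement with the largest $P$ and $p$, one gets $\max_i\eta_4(M_i)=\eta_4$ with $\eta_4=\mu_1+\mu_2+\mu_4-\mu_9\bigl(1-\sqrt{(\mu_5+\mu_8+\mu_9)/(\mu_1+\mu_2+\mu_4)}\bigr)$. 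Chaining the inequalities gives $\lambda_4\le\eta_4$, and since $\eta_4\ge\lambda_4$ the operator $\mathcal{SW}_4=\eta_4\mathbb I-X$ inherits nonnegative expectation on every state of $S_3$ by Theorem \ref{BiOSDSNW}, completing the construction; Lemma \ref{lem3.1mu} ensures $\mu_i\le1$ so that all radicands and denominators are well defined.

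The hard part will be the passage from the constrained quadratic maximisation $\max_{\mathbf s\ge 0,\,\|\mathbf s\|=1}\mathbf s^\top M_i\mathbf s$ to a quantity legitimately controlled by the row-sum estimate \eqref{NMB2}: that inequality governs the Perron root $\rho(M_i)$, whereas the quadratic form equals $\mathbf s^\top M_i^{\mathrm{symm}}\mathbf s$ and is governed by $\rho(M_i^{\mathrm{symm}})$, and the two can differ for non-symmetric $M_i$. I would close this gap by verifying that the symmetrised matrices $M_i^{\mathrm{symm}}$ have their maximal row sum dominated by $P=\mu_1+\mu_2+\mu_4$ (the inequalities $\mu_2\ge\mu_3$, $\mu_4\ge\mu_5$ force the top row sum to exceed the top column sum), so that the triple $(P,p,m)=(\mu_1+\mu_2+\mu_4,\;\mu_5+\mu_8+\mu_9,\;\mu_9)$ remains a valid majorant when the Ledermann estimate is applied, and I would check carefully that its parameter $\delta$ can be taken to be $p/P$ for these arrangements. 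Resolving this reconciliation is the crux; once it is in place, the resulting $\eta_4$ automatically respects the hierarchy $\lambda_4\le\eta_4\le P_4$ recorded in \eqref{eq019all}.
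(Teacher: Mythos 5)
Your proposal follows essentially the same route as the paper: the paper obtains this corollary by specializing Proposition \ref{thmetamn} to $k=3$, i.e.\ applying Ledermann's estimate \eqref{NMB2} with the triple $(P,p,m)=(\mu_1+\mu_2+\mu_4,\;\mu_5+\mu_8+\mu_9,\;\mu_9)$ to the two arrangement matrices $M_1^3,M_2^3$ of \eqref{matricesM42}, and then using the monotonicity of $g(P,p)=P-\mu_9\bigl(1-\sqrt{p/P}\bigr)$ in both arguments to single out the canonical arrangement. Your reduction of $\lambda_4$ to $\max_i\max_{\mathbf s}\mathbf s^\top M_i\mathbf s$, your parameter choices, and your monotonicity argument are all the same as the paper's.

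However, the ``hard part'' you flag at the end is a genuine gap, and neither your sketch nor the paper closes it. The constrained maximum $\max_{\mathbf s}\mathbf s^\top M_i\mathbf s$ equals the largest eigenvalue of $M_i^{\mathrm{symm}}=(M_i+M_i^\top)/2$, and for nonnegative matrices $\rho(M_i^{\mathrm{symm}})\ge\rho(M_i)$ (if $v$ is the unit Perron vector of $M_i$, then $\rho(M_i^{\mathrm{symm}})\ge v^\top M_i v=\rho(M_i)$), so row-sum bounds on the non-symmetric $M_i$, which is what \eqref{NMB2} controls, do not bound the quantity defining $\lambda_4$. Your proposed repair---apply \eqref{NMB2} to $M_i^{\mathrm{symm}}$ and check that $(P,p,m)$ remains a valid majorant---fails in one specific direction: the maximal row sum of $M_1^{3,\mathrm{symm}}$ is indeed at most $\mu_1+\mu_2+\mu_4$ and its least entry is still $\mu_9$, but its \emph{minimal} row sum is $(\mu_4+\mu_5)/2+(\mu_7+\mu_8)/2+\mu_9\;\ge\;\mu_5+\mu_8+\mu_9=p$, and the Ledermann upper bound is increasing in the minimal row sum (this is exactly the monotonicity $\partial g/\partial p>0$ you proved). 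So whenever, say, $\mu_2=\mu_3$, $\mu_4=\mu_5$ and $\mu_7>\mu_8$, Ledermann applied honestly to the symmetrized matrix yields a value strictly larger than $\eta_4$, and the inequality $\lambda_4\le\eta_4$ cannot be concluded along this path; some additional argument (or a restatement of $\eta_4$ in terms of the row sums of the symmetrized matrices) is required. To be clear, this hole is inherited from the source: the paper's proof of Proposition \ref{thmetamn} applies \eqref{NMB2} directly to the non-symmetric $M_i^k$ and never reconciles $\rho(M_i^k)$ with the quadratic form, so your write-up, which at least names the discrepancy, is the more careful of the two---but as it stands the crux remains unproved.
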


%In fact, the literature \cite{PRL24OSD} demonstrates that the coefficient $\lambda_4$ can be obtained as the maximum of the maximum eigenvalue of the symmetrized versions of matrices $M_1^4$ and $M_2^4$ as  \eqref{matricesM42}.
%Based on inequalities \eqref{NMB2}, there is
%\begin{equation*}
%   \lambda_4\leq \max \left\{\mu_1+\mu_2+\mu_4-\mu_9\left(1-\sqrt{\frac{\mu_5+\mu_8+\mu_9}{\mu_1+\mu_2+\mu_4}}\right), \mu_1+\mu_2+\mu_5-\mu_9\left(1-\sqrt{\frac{\mu_5+\mu_8+\mu_9}{\mu_1+\mu_2+\mu_5}}\right)\right\}.
%\end{equation*}
%
%We construct the function $$g(x)=a+x-c+(\frac{bc^2}{a+x})^{\frac{1}{2}}, x\in [\mu_5, \mu_4],$$ where $a=\mu_1+\mu_2$, $b=\mu_5+\mu_8+\mu_9$ and $c=\mu_9$. Consequently, there is $$g'(x)=1-(\frac{a+x}{bc^2})^{\frac{1}{2}}=1-\frac{1}{c}\sqrt{\frac{a+x}{b}}.$$
%
%According to Lemma \ref{lem3.1mu} and the fact $a+x\geq b$, there is $g'(x)\leq 0.$
%It yields
%\begin{equation*}
%  \lambda_4\leq \mu_1+\mu_2+\mu_5-\mu_9\left(1-\sqrt{\frac{\mu_5+\mu_8+\mu_9}{\mu_1+\mu_2+\mu_5}}\right)\leq \mu_1+\mu_2+\mu_5\leq \mu_1+\mu_2+\mu_4.
%\end{equation*}
%
%Furthermore, there is the $4$-Schmidt witness $\mathcal{W}_{4 \mathrm{S}}=\eta_4 \mathbb{I}-X.$

\begin{corollary}
Using the same notations as in Corollary \ref{corSN41}. We define the $\eta_{4}$ as following:
\begin{equation*}
 \zeta_4 =\mu_1+\mu_2+\mu_4-\mu_9\left(1-\sqrt{\frac{\mu_5+\mu_8}{\mu_1+\mu_2+\mu_4-\mu_9}}\right).
\end{equation*}
Notably, within this context, we have the following inequality:
\begin{equation*}
  \lambda_4\leq \zeta_4.
\end{equation*}
Subsequently, we construct the following $4$-Schmidt witness:
\begin{equation*}
  \mathcal{SW}_{4}=\zeta_4 \mathbb{I}-X.
\end{equation*}
\end{corollary}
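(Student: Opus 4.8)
The plan is to obtain this corollary as the $k=3$ (equivalently $k+1=4$) instance of Proposition \ref{thmzetamn}, so the argument runs exactly parallel to the proof of that proposition and of Corollary \ref{corSN41}, only with the concrete $3\times 3$ arrangement matrices $M_1^3,M_2^3$ of \eqref{matricesM42} replacing the generic $M_u^k$. First I would recall the reduction already set up in \eqref{eq018}--\eqref{eq019}: writing $\lambda_4=\max_{|\Psi^3\rangle\in S_3}\langle\Psi^3|X|\Psi^3\rangle$ and expanding both $X$ and the projector $|\Psi^3\rangle\langle\Psi^3|$ in their operator Schmidt decompositions, the rearrangement inequality bounds $\lambda_4$ by $\max_{\bar s}\sum_j\mu_j\tilde s_{j\downarrow}$, where the $\tilde s_{j\downarrow}$ are the decreasingly ordered products $s_as_b$ of the three Schmidt coefficients $s_1\ge s_2\ge s_3\ge 0$ with $\sum_i s_i^2=1$. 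For each admissible ordering of these products this sum is a quadratic form whose coefficient array is one of the $M_i^3$; for $k=3$ only the two matrices in \eqref{matricesM42} arise.

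Second, I would apply Ostrowski's bound \eqref{NMB3} to each $M_i^3$. Since each $M_i^3$ is a permutation placement of $\{\mu_1,\dots,\mu_9\}$, its smallest entry is the least coefficient $m=\mu_9$, its maximal row sum is the first-row sum $P=P(M_i^3)$, and its minimal row sum is the last-row sum $p=p(M_i^3)$; then \eqref{NMB3} yields $\rho(M_i^3)\le P-\mu_9\bigl(1-\sqrt{(p-\mu_9)/(P-\mu_9)}\bigr)$.

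Third, as in the proof of Proposition \ref{thmzetamn}, I would set $g(P,p)=P-\mu_9\bigl(1-\sqrt{(p-\mu_9)/(P-\mu_9)}\bigr)$ and check $\partial g/\partial P\ge 0$ and $\partial g/\partial p\ge 0$, so that $g$ is nondecreasing in both arguments. Because the $\mu_i$ are ordered decreasingly, a direct comparison shows $M_1^3$ dominates $M_2^3$ in both extremal row sums, namely $P(M_1^3)=\mu_1+\mu_2+\mu_4\ge\mu_1+\mu_2+\mu_5=P(M_2^3)$ and $p(M_1^3)=\mu_5+\mu_8+\mu_9\ge\mu_6+\mu_8+\mu_9=p(M_2^3)$. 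Monotonicity of $g$ therefore selects $M_1^3$ as the extremal arrangement, and substituting $P=\mu_1+\mu_2+\mu_4$, $p=\mu_5+\mu_8+\mu_9$, $m=\mu_9$ reproduces the stated $\zeta_4$. This establishes $\lambda_4\le\zeta_4$, whence the witness $\mathcal{SW}_4=\zeta_4\mathbb{I}-X$.

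The step I expect to be the main obstacle is the first one: justifying rigorously that the constrained fidelity maximization over Schmidt-rank-three states is governed by the Perron/row-sum bounds of the nonnegative arrangement matrices. The delicate point is that $\max_{\bar s}\sum_j\mu_j\tilde s_{j\downarrow}$ is intrinsically the largest eigenvalue of the symmetrized array $(M_i^3)^{\mathrm{symm}}$, whereas \eqref{NMB3} is phrased in terms of the spectral radius and the row sums of $M_i^3$ itself; one must verify that feeding the row data $P=\mu_1+\mu_2+\mu_4$, $p=\mu_5+\mu_8+\mu_9$ and $m=\mu_9$ into \eqref{NMB3} still produces a valid upper bound for that symmetrized eigenvalue. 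Once this compatibility is secured, the monotonicity-and-dominance argument of the third step is routine.
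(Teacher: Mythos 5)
Your proposal is correct and follows essentially the same route as the paper's: the paper obtains this corollary as the $k=3$ instance of Proposition \ref{thmzetamn}, i.e.\ it applies Ostrowski's bound \eqref{NMB3} to the two arrangement matrices of \eqref{matricesM42}, invokes the monotonicity of $g(P,p)$ in both arguments to single out $M_1^3$ (the paper's $M_u^{k}$), and substitutes $P=\mu_1+\mu_2+\mu_4$, $p=\mu_5+\mu_8+\mu_9$, $m=\mu_9$, exactly as you do. The ``delicate point'' you flag at the end --- that \eqref{NMB3} bounds the spectral radius $\rho(M_i^3)$ whereas $\lambda_4$ is governed by $\operatorname{maxEig}\bigl((M_i^3)^{\mathrm{symm}}\bigr)$, which for a non-symmetric nonnegative matrix can exceed $\rho(M_i^3)$ --- is a genuine subtlety, but the paper's own proof passes over it in silence (it applies the row-sum bounds directly to the unsymmetrized $M_i^k$), so your argument reproduces the paper's reasoning in full, including its one unaddressed step.
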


From Example 1, if $k=3$, there is $$\rho_3=\frac{1}{2}|\phi^+\rangle \langle \phi^+|+\frac{1}{4}(|12\rangle+|21\rangle)(\langle 12|+\langle 21|),$$ with $|\phi^+\rangle=\frac{1}{\sqrt{3}}(|00\rangle+|11\rangle+|22\rangle).$

There are following the operator Schmidt coefficients(OSC):
$$\mu_1=\mu_2=0.4167,  \mu_3=\mu_4=\mu_5=\mu_6=\mu_7=0.1667,  \mu_8=\mu_9=0.0833.$$

Therefore, there are
$$\lambda_{4}=0.6642, \theta_{4}=0.9420, \zeta_{4}=0.9603 , \zeta_{4}=0.9649, P_{4}=1.0001.$$
%\begin{remark}
%It is noting that $\theta_4\leq \zeta_4\leq \eta_4$, the $4$-Schmidt witness $\mathcal{W}_{4 \mathrm{S}}=\zeta_4 \mathbb{I}-X$ is a more effective $4$-Schmidt witness than $\mathcal{W}_{4 \mathrm{S}}=\eta_4 \mathbb{I}-X$, and identify a broader range of entangled states.
%\end{remark}

\subsection{Characterizing Schmidt number witnesses for arbitrary dimensional bipartite quantum states in Schmidt number five }

Let us now focus on the identification of witnesses certifying the Schmidt number $k=5$. Our objective is to determine the coefficient $\lambda_5$ such that the Schmidt witness yields a nonnegative expectation value for all states $\left|\psi_4\right\rangle$ with $\mathrm{SN} (k-1)=4$.

Following a procedure similar to that of $k=4$, the Operator Schmidt Decomposition (OSD) of the projector $\left|\psi_4\right\rangle\left\langle\psi_4\right|$ is defined by the set
\begin{equation*}
  \left\{\tilde{\mu}_j\right\}=\left\{s_1 s_1, s_1 s_2, s_1 s_3, s_1 s_4, s_2 s_2, s_2 s_3, s_2 s_4,s_3 s_3, s_3 s_4\right\},
\end{equation*}
In order to derive the function $f_{i}\left(s_1, s_2, s_3, s_4\right)=\sum_{i j} \mu_i \tilde{\mu}_j$, the OSD coefficients $\tilde{\mu}_j$ must be arranged in descending order.
Since $s_1\geq  s_2\geq  s_3 \geq  s_4\geq 0$, there are twelve possible orderings for the numbers $\{s_{i} s_{j}\}_{1\le i,j\le 4}$:

$$s_{1}^{2}\geq s_{1} s_{2}\geq s_{1} s_{3}\geq  s_{1}s_{4} \geq  s_{2}^{2}\geq s_{2}s_{3}\geq  \max{(s_{2}s_{4}, s_{3}^{2})}\geq \min{(s_{2}s_{4}, s_{3}^{2})}\geq s_{3}s_{4}\geq s_{4}^{2},$$
$$s_{1}^{2}\geq s_{1} s_{2}\geq \max{(s_{1} s_{3}, s_{2}^{2})}\geq \min{(s_{1} s_{3}, s_{2}^{2})}\geq s_{2}s_{3}\geq s_{3}^{2}\geq s_{1}s_{4}\geq s_{2}s_{4}\geq s_{3}s_{4}\geq s_{4}^{2},$$
$$s_{1}^{2}\geq s_{1} s_{2}\geq \max{(s_{1} s_{3}, s_{2}^{2})}\geq \min{(s_{1} s_{3}, s_{2}^{2})}\geq s_{2}s_{3}\geq s_{1}s_{4}\geq s_{3}^{2}\geq s_{2}s_{4}\geq s_{3}s_{4}\geq s_{4}^{2},$$
$$s_{1}^{2}\geq s_{1} s_{2}\geq \max{(s_{1} s_{3}, s_{2}^{2})}\geq \min{(s_{1} s_{3}, s_{2}^{2})}\geq s_{2}s_{3}\geq s_{1}s_{4}\geq s_{2}s_{4}\geq s_{3}^{2}\geq s_{3}s_{4}\geq s_{4}^{2},$$
$$s_{1}^{2}\geq s_{1} s_{2}\geq \max{(s_{1} s_{3}, s_{2}^{2})}\geq \min{(s_{1} s_{3}, s_{2}^{2})}\geq s_{1}s_{4}\geq s_{2}s_{3}\geq
s_{3}^{2}\geq s_{2}s_{4}\geq s_{3}s_{4}\geq s_{4}^{2},$$
$$s_{1}^{2}\geq s_{1} s_{2}\geq \max{(s_{1} s_{3}, s_{2}^{2})}\geq \min{(s_{1} s_{3}, s_{2}^{2})}\geq s_{1}s_{4}\geq s_{2}s_{3}\geq  s_{2}s_{4}\geq s_{3}^{2}\geq s_{3}s_{4}\geq s_{4}^{2}.$$

This leads to twelve possible functions $f_{i}\left(s_1, s_2, s_3, s_4\right), i=1\cdots 12$ that have to be considered:
$$\begin{aligned}
 f_{1}\left(s_1, s_2, s_3,s_4\right)=& \mu_1 s_1^2+(\mu_2+\mu_3) s_1 s_2+(\mu_4+\mu_5) s_1 s_3+(\mu_6+\mu_7) s_1 s_4 +\mu_8 s_2^2+(\mu_9+\mu_{10}) s_2 s_3 \\
& +(\mu_{11}+\mu_{12})s_2 s_4+ \mu_{13}s_3^2+(\mu_{14}+\mu_{15})s_3 s_4+ \mu_{16}s_4^2,
\end{aligned}$$
$$\begin{aligned}
 f_{2}\left(s_1, s_2, s_3,s_4\right)=& \mu_1 s_1^2+(\mu_2+\mu_3) s_1 s_2+(\mu_4+\mu_5) s_1 s_3+(\mu_6+\mu_7) s_1 s_4 +\mu_8 s_2^2+(\mu_9+\mu_{10}) s_2 s_3 \\
& +(\mu_{12}+\mu_{13})s_2 s_4+ \mu_{11}s_3^2+(\mu_{14}+\mu_{15})s_3 s_4+ \mu_{16}s_4^2,
\end{aligned}$$
$$\begin{aligned}
 f_{3}\left(s_1, s_2, s_3,s_4\right)=& \mu_1 s_1^2+(\mu_2+\mu_3) s_1 s_2+(\mu_4+\mu_5) s_1 s_3+(\mu_{10}+\mu_{11}) s_1 s_4 +\mu_6 s_2^2+(\mu_7+\mu_{8}) s_2 s_3 \\
& +(\mu_{12}+\mu_{13})s_2 s_4+ \mu_{9}s_3^2+(\mu_{14}+\mu_{15})s_3 s_4+ \mu_{16}s_4^2,
\end{aligned}$$
$$\begin{aligned}
 f_{4}\left(s_1, s_2, s_3,s_4\right)=& \mu_1 s_1^2+(\mu_2+\mu_3) s_1 s_2+(\mu_5+\mu_6) s_1 s_3+(\mu_{10}+\mu_{11}) s_1 s_4 +\mu_4 s_2^2+(\mu_7+\mu_{8}) s_2 s_3 \\
& +(\mu_{12}+\mu_{13})s_2 s_4+ \mu_{9}s_3^2+(\mu_{14}+\mu_{15})s_3 s_4+ \mu_{16}s_4^2,
\end{aligned}$$
$$\begin{aligned}
 f_{5}\left(s_1, s_2, s_3,s_4\right)=& \mu_1 s_1^2+(\mu_2+\mu_3) s_1 s_2+(\mu_4+\mu_5) s_1 s_3+(\mu_9+\mu_{10}) s_1 s_4 +\mu_6 s_2^2+(\mu_7+\mu_{8}) s_2 s_3 \\
& +(\mu_{12}+\mu_{13})s_2 s_4+ \mu_{11}s_3^2+(\mu_{14}+\mu_{15})s_3 s_4+ \mu_{16}s_4^2,
\end{aligned}$$
$$\begin{aligned}
 f_{6}\left(s_1, s_2, s_3,s_4\right)=& \mu_1 s_1^2+(\mu_2+\mu_3) s_1 s_2+(\mu_5+\mu_6) s_1 s_3+(\mu_9+\mu_{10}) s_1 s_4 +\mu_4 s_2^2+(\mu_7+\mu_{8}) s_2 s_3 \\
& +(\mu_{12}+\mu_{13})s_2 s_4+ \mu_{11}s_3^2+(\mu_{14}+\mu_{15})s_3 s_4+ \mu_{16}s_4^2,
\end{aligned}$$
$$\begin{aligned}
 f_{7}\left(s_1, s_2, s_3,s_4\right)=& \mu_1 s_1^2+(\mu_2+\mu_3) s_1 s_2+(\mu_4+\mu_5) s_1 s_3+(\mu_9+\mu_{10}) s_1 s_4 +\mu_6 s_2^2+(\mu_7+\mu_{8}) s_2 s_3 \\
& +(\mu_{11}+\mu_{12})s_2 s_4+ \mu_{13}s_3^2+(\mu_{14}+\mu_{15})s_3 s_4+ \mu_{16}s_4^2,
\end{aligned}$$
$$\begin{aligned}
 f_{8}\left(s_1, s_2, s_3,s_4\right)=& \mu_1 s_1^2+(\mu_2+\mu_3) s_1 s_2+(\mu_5+\mu_6) s_1 s_3+(\mu_9+\mu_{10}) s_1 s_4 +\mu_4 s_2^2+(\mu_7+\mu_{8}) s_2 s_3 \\
& +(\mu_{11}+\mu_{12})s_2 s_4+ \mu_{13}s_3^2+(\mu_{14}+\mu_{15})s_3 s_4+ \mu_{16}s_4^2,
\end{aligned}$$
$$\begin{aligned}
 f_{9}\left(s_1, s_2, s_3,s_4\right)=& \mu_1 s_1^2+(\mu_2+\mu_3) s_1 s_2+(\mu_4+\mu_5) s_1 s_3+(\mu_7+\mu_{8}) s_1 s_4 +\mu_6 s_2^2+(\mu_9+\mu_{10}) s_2 s_3 \\
& +(\mu_{12}+\mu_{13})s_2 s_4+ \mu_{11}s_3^2+(\mu_{14}+\mu_{15})s_3 s_4+ \mu_{16}s_4^2,
\end{aligned}$$
$$\begin{aligned}
 f_{10}\left(s_1, s_2, s_3,s_4\right)=& \mu_1 s_1^2+(\mu_2+\mu_3) s_1 s_2+(\mu_5+\mu_6) s_1 s_3+(\mu_7+\mu_{8}) s_1 s_4 +\mu_4 s_2^2+(\mu_9+\mu_{10}) s_2 s_3 \\
& +(\mu_{12}+\mu_{13})s_2 s_4+ \mu_{11}s_3^2+(\mu_{14}+\mu_{15})s_3 s_4+ \mu_{16}s_4^2,
\end{aligned}$$
$$\begin{aligned}
 f_{11}\left(s_1, s_2, s_3,s_4\right)=& \mu_1 s_1^2+(\mu_2+\mu_3) s_1 s_2+(\mu_4+\mu_5) s_1 s_3+(\mu_7+\mu_{8}) s_1 s_4 +\mu_6 s_2^2+(\mu_9+\mu_{10}) s_2 s_3 \\
& +(\mu_{11}+\mu_{12})s_2 s_4+ \mu_{13}s_3^2+(\mu_{14}+\mu_{15})s_3 s_4+ \mu_{16}s_4^2,
\end{aligned}$$
$$\begin{aligned}
 f_{12}\left(s_1, s_2, s_3,s_4\right)=& \mu_1 s_1^2+(\mu_2+\mu_3) s_1 s_2+(\mu_5+\mu_6) s_1 s_3+(\mu_7+\mu_{8}) s_1 s_4 +\mu_4 s_2^2+(\mu_9+\mu_{10}) s_2 s_3 \\
& +(\mu_{11}+\mu_{12})s_2 s_4+ \mu_{13}s_3^2+(\mu_{14}+\mu_{15})s_3 s_4+ \mu_{16}s_4^2.
\end{aligned}$$

The coefficient $\lambda_5$ is subsequently determined as the maximum among these twelve functions:
\begin{equation}\label{LOfuction}
  \max\limits_{s_1, s_2, s_3, s_4}\{f_{i}\left(s_1, s_2, s_3, s_4\right), i=1\cdots 12\}.
\end{equation}

This can be achieved by expressing $f_{i}(s_1, s_2, s_3, s_4)$ in matrices terms, following a methodology akin to that employed for $k =3,4$. The corresponding matrices are provided by
\begin{equation*}
M_1^4=
  \begin{pmatrix}
    \mu_1 & \mu_2 & \mu_4 & \mu_6\\
    \mu_3 & \mu_8 & \mu_9 & \mu_{11} \\
    \mu_5 & \mu_{10} & \mu_{13} & \mu_{14}\\
    \mu_7 & \mu_{12} & \mu_{15} & \mu_{16}\\
  \end{pmatrix}, \quad
M_2^4=
  \begin{pmatrix}
     \mu_1 & \mu_2 & \mu_4 & \mu_6\\
    \mu_3 & \mu_8 & \mu_9 & \mu_{12} \\
    \mu_5 & \mu_{10} & \mu_{11} & \mu_{14}\\
    \mu_7 & \mu_{13} & \mu_{15} & \mu_{16}\\
  \end{pmatrix},\quad
M_3^4=
  \begin{pmatrix}
    \mu_1 & \mu_2 & \mu_4 & \mu_{10}\\
    \mu_3 & \mu_6 & \mu_7 & \mu_{12} \\
    \mu_5 & \mu_{8} & \mu_{9} & \mu_{14}\\
    \mu_{11} & \mu_{13} & \mu_{15} & \mu_{16}\\
  \end{pmatrix},
\end{equation*}
\begin{equation*}
M_4^4=
  \begin{pmatrix}
   \mu_1 & \mu_2 & \mu_5 & \mu_{10}\\
    \mu_3 & \mu_4 & \mu_7 & \mu_{12} \\
    \mu_6 & \mu_{8} & \mu_{9} & \mu_{14}\\
    \mu_{11} & \mu_{13} & \mu_{15} & \mu_{16}\\
  \end{pmatrix}, \quad
M_5^4=
  \begin{pmatrix}
   \mu_1 & \mu_2 & \mu_4 & \mu_{9}\\
    \mu_3 & \mu_6 & \mu_7 & \mu_{12} \\
    \mu_5 & \mu_{8} & \mu_{11} & \mu_{14}\\
    \mu_{10} & \mu_{13} & \mu_{15} & \mu_{16}\\
  \end{pmatrix}, \quad
M_6^4=
  \begin{pmatrix}
    \mu_1 & \mu_2 & \mu_5 & \mu_{9}\\
    \mu_3 & \mu_4 & \mu_7 & \mu_{12} \\
    \mu_6 & \mu_{8} & \mu_{11} & \mu_{14}\\
    \mu_{10} & \mu_{13} & \mu_{15} & \mu_{16}\\
  \end{pmatrix},
\end{equation*}
\begin{equation*}
M_7^4=
  \begin{pmatrix}
    \mu_1 & \mu_2 & \mu_4 & \mu_{9}\\
    \mu_3 & \mu_6 & \mu_7 & \mu_{11} \\
    \mu_5 & \mu_{8} & \mu_{13} & \mu_{14}\\
    \mu_{10} & \mu_{12} & \mu_{15} & \mu_{16}\\
  \end{pmatrix}, \quad
M_8^4=
  \begin{pmatrix}
    \mu_1 & \mu_2 & \mu_5 & \mu_{9}\\
    \mu_3 & \mu_4 & \mu_7 & \mu_{11} \\
    \mu_6 & \mu_{8} & \mu_{13} & \mu_{14}\\
    \mu_{10} & \mu_{12} & \mu_{15} & \mu_{16}\\
  \end{pmatrix},\quad
M_9^4=
  \begin{pmatrix}
    \mu_1 & \mu_2 & \mu_4 & \mu_{7}\\
    \mu_3 & \mu_6 & \mu_9 & \mu_{12} \\
    \mu_5 & \mu_{10} & \mu_{11} & \mu_{14}\\
    \mu_{8} & \mu_{13} & \mu_{15} & \mu_{16}\\
  \end{pmatrix},
\end{equation*}
\begin{equation*}
M_{10}^4=
  \begin{pmatrix}
    \mu_1 & \mu_2 & \mu_5 & \mu_{7}\\
    \mu_3 & \mu_4 & \mu_9 & \mu_{12} \\
    \mu_6 & \mu_{10} & \mu_{11} & \mu_{14}\\
    \mu_{8} & \mu_{13} & \mu_{15} & \mu_{16}\\
  \end{pmatrix},\quad
M_{11}^4=
  \begin{pmatrix}
    \mu_1 & \mu_2 & \mu_4 & \mu_{7}\\
    \mu_3 & \mu_6 & \mu_9 & \mu_{11} \\
    \mu_5 & \mu_{10} & \mu_{13} & \mu_{14}\\
    \mu_{8} & \mu_{12} & \mu_{15} & \mu_{16}\\
  \end{pmatrix}, \quad
M_{12}^4=
  \begin{pmatrix}
    \mu_1 & \mu_2 & \mu_5 & \mu_{7}\\
    \mu_3 & \mu_4 & \mu_9 & \mu_{11} \\
    \mu_6 & \mu_{10} & \mu_{13} & \mu_{14}\\
    \mu_{8} & \mu_{12} & \mu_{15} & \mu_{16}\\
  \end{pmatrix}.
\end{equation*}
Following the established methodology, the coefficient $\lambda_5$ can be expressed as
$$
\lambda_5=\max \left\{\operatorname{maxEig}\left(M_i^{\text {symm }}\right), i=1, \ldots, 12\right\},
$$
where $\operatorname{maxEig}\left(M_i^{\text {symm }}\right)$ denotes the maximum eigenvalue of $M_i^{\text {symm }}$, and $M_i^{\text {symm }}$ represent the symmetrized forms of matrix $M_i$ as defined in Equation \eqref{LOfuction}. Since the $M_i$ matrices are $4 \times 4$ matrices, $\lambda_5$ is the solution to a fourth-order polynomial in the variables $\mu_i(i=1, \ldots, 16)$, which correspond to the initial sixteen Operator Schmidt Coefficients (OSC) of the operator $X$.

It is not difficult to find that all matrices $M_i$ are nonnegative matrices, based on Theorem \ref{thmthetamn} and \eqref{NMB2},\eqref{NMB3}, \eqref{NMB4}, there are the following bounds.

\begin{corollary}\label{corSN51}
Consider a hermitian operator $X$ in $\mathcal{L}(\mathcal{H}_m) \otimes \mathcal{L}(\mathcal{H}_n)$ $(n\geq m\geq 5)$,  the optimal $5$-Schmidt witness based on the OSD is $\mathcal{SW}_{5 }=\lambda_{5} \mathbb{I}-X$, with $\lambda_{5} = \max _{\left|\Psi^{4}\right\rangle \in S_{4}}\left\langle\Psi^{4}|X| \Psi^{4}\right\rangle$ and $S_{4}$ denotes the states set with Schmidt number $4$. We define the $\theta_{5}$ as following:
\begin{eqnarray}\label{eqbound03}
% \nonumber % Remove numbering (before each equation)
  \theta_5 &=&  \mu_1+\mu_2+\mu_4+\mu_6-\mu_{16}\\
   &+& \left(\frac{2\mu_{16}(\mu_7+\mu_{12}+\mu_{15})}{\mu_1+\mu_2+\mu_4+\mu_{6}-2\mu_{16}+\sqrt{(\mu_1+\mu_2+\mu_4+\mu_{6})^2-4\mu_{16}(\mu_1+\mu_2+\mu_4+\mu_{6}-\mu_7-\mu_{12}-\mu_{15}-\mu_{16})}}\right) \nonumber.
\end{eqnarray}
where $\mu_i, i=1,\cdots,16$ are Operator Schmidt coefficients of $X$,  $P_{5}=\mu_1+\mu_2+\mu_4+\mu_{6}$ and $p_{5}=\mu_{7}+\mu_{12}+\mu_{15}+\mu_{16}$,
Notably,
\begin{equation*}
  \lambda_5\leq \theta_5.
\end{equation*}
Subsequently, we construct the following $5$-Schmidt witness:
\begin{equation*}
  \mathcal{SW}_{5}=\theta_5 \mathbb{I}-X.
\end{equation*}
\end{corollary}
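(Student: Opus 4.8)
The plan is to recognize Corollary \ref{corSN51} as the instance $k=4$ of Theorem \ref{thmthetamn} and to verify that the abstract data of that theorem instantiate to the stated quantities $P_5$, $p_5$, and $\theta_5$. First I would specialize the reduction leading to \eqref{eq019}: writing a Schmidt-rank-four target $\left|\Psi^4\right\rangle=\sum_{i=1}^4 s_i|\alpha_i\beta_i\rangle$ with $s_1\ge s_2\ge s_3\ge s_4\ge 0$ and $\sum_i s_i^2=1$, the projector $\left|\Psi^4\right\rangle\left\langle\Psi^4\right|$ has operator Schmidt coefficients $\{s_a s_b\}_{1\le a,b\le 4}$, and the rearrangement (von Neumann) trace bound gives $\langle\Psi^4|X|\Psi^4\rangle\le\sum_j \mu_j\,\tilde\mu_{j\downarrow}$. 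Since the relative order of the products $s_a s_b$ is not fixed by $s_1\ge\cdots\ge s_4$, one enumerates the twelve admissible orderings, obtaining the twelve quadratic forms $f_1,\dots,f_{12}$ and their matrices $M_1^4,\dots,M_{12}^4$ as in \eqref{LOfuction}, so that $\lambda_5=\max_i \operatorname{maxEig}(M_i^{\mathrm{symm}})$.

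Next, exactly as in the proof of Theorem \ref{thmthetamn}, I would note that every $M_i^4$ is a nonnegative matrix whose smallest entry is the globally smallest coefficient $\mu_{16}$ (it occupies the $(4,4)$ slot of each $M_i^4$). Applying Brauer's bound \eqref{NMB4} with $m=\mu_{16}$ to each $M_i^4$ yields the upper bound $g\!\left(P(M_i^4),p(M_i^4)\right)$, where $P(M_i^4)$ and $p(M_i^4)$ denote the maximal and minimal row sums and $g(P,p)$ is the binary function of the theorem. I would then reuse the monotonicity already established there, namely $\partial g/\partial P\ge 0$ and $\partial g/\partial p\ge 0$, which reduces the task to locating the ordering that simultaneously maximizes $P$ and $p$.

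The crux is the combinatorial claim that this worst case is $M_1^4=M_u^4$. Because the canonical ordering $s_1^2\ge s_1s_2\ge s_1s_3\ge s_1s_4\ge s_2^2\ge\cdots\ge s_4^2$ places the lowest-indexed coefficients in the top row and the highest-indexed ones in the bottom row, the first row of $M_1^4$ carries $\mu_1,\mu_2,\mu_4,\mu_6$ and its fourth row carries $\mu_7,\mu_{12},\mu_{15},\mu_{16}$; inspecting the first and fourth rows of the remaining eleven matrices and using that the $\mu_i$ decrease shows that $P(M_1^4)=\mu_1+\mu_2+\mu_4+\mu_6=P_5$ is the largest maximal row sum and $p(M_1^4)=\mu_7+\mu_{12}+\mu_{15}+\mu_{16}=p_5$ is the largest minimal row sum among all twelve. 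By monotonicity of $g$ this gives $\max_i g\!\left(P(M_i^4),p(M_i^4)\right)=g(P_5,p_5)=\theta_5$, and substituting $P_5,p_5,\mu_{16}$ into the theorem's formula reproduces \eqref{eqbound03}, so $\lambda_5\le\theta_5$ and $\mathcal{SW}_5=\theta_5\mathbb{I}-X$ is a valid $5$-Schmidt witness. I expect the main obstacle to be precisely this simultaneous-maximization step — checking across all twelve orderings that no competing matrix has a larger minimal row sum — together with the care needed in passing from the nonnegative-matrix spectral bounds to the maximal eigenvalue of the symmetrized forms $M_i^{\mathrm{symm}}$ that actually control $\lambda_5$.
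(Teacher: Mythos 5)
Your proposal follows essentially the same route as the paper's own treatment: the paper proves Corollary \ref{corSN51} simply as the $k=4$ instance of Theorem \ref{thmthetamn}, by listing the twelve orderings and matrices $M_1^4,\dots,M_{12}^4$, applying Brauer's bound \eqref{NMB4} to each nonnegative matrix, and invoking the monotonicity of $g(P,p)$. Your explicit verification that $M_1^4=M_u^4$ simultaneously carries the largest maximal row sum $P_5=\mu_1+\mu_2+\mu_4+\mu_6$ and the largest minimal row sum $p_5=\mu_7+\mu_{12}+\mu_{15}+\mu_{16}$, with $\mu_{16}$ the least entry of every $M_i^4$, is exactly the step the paper compresses into ``it is not difficult to find,'' and your check is correct.

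However, the obstacle you flag in your final sentence is a genuine gap, and you should know that the paper does not close it either: it, too, applies the spectral-radius bounds to the non-symmetric matrices $M_i^4$, while $\lambda_5$ is defined through $\max_i\operatorname{maxEig}(M_i^{\mathrm{symm}})$. For a nonnegative matrix $M$ one always has $\operatorname{maxEig}(M^{\mathrm{symm}})\geq\rho(M)$ (evaluate the quadratic form of $M^{\mathrm{symm}}$ at the Perron vector of $M$), and the inequality can be strict even for matrices with the ordered structure arising here: already for $k=2$, taking $\mu_1=\mu_2=1$, $\mu_3=\mu_4=0$ gives $\rho(M)=1$ while $\lambda_3=(1+\sqrt{2})/2>1$. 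So a bound on $\rho(M_i^4)$ does not by itself bound $\lambda_5$, and the argument must be run on $M_i^{\mathrm{symm}}$ instead. There the row sums are the averages $(R_r+C_r)/2$ of the row and column sums of $M_i^4$; since every $R_r$ and $C_r$ is dominated entrywise by the first row, the coarse Frobenius conclusion $\lambda_5\leq P_5$ survives, but the sharper Brauer constant does not transfer: the minimal symmetrized row sum satisfies $(R_4+C_4)/2\geq p_5$, and because $g$ is increasing in its second argument this pushes the bound \emph{above} $g(P_5,p_5)=\theta_5$, so monotonicity no longer yields \eqref{eqbound03}. In short, your write-up is faithful to --- indeed slightly more careful than --- the paper's own proof, but the step you defer is missing from both, and repairing it requires re-deriving the correction term from the symmetrized matrices rather than from $P_5$, $p_5$, and $\mu_{16}$ as stated.
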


\begin{corollary}
Using the same notations as in Corollary \ref{corSN51}.  We define the $\eta_{5}$ as following:
\begin{equation}\label{eqbound01}
 \eta_5 =\mu_1+\mu_2+\mu_4+\mu_{6}-\mu_{16}\left(1-\sqrt{\frac{\mu_7+\mu_{12}+\mu_{15}+\mu_{16}}{\mu_1+\mu_2+\mu_4+\mu_{6}}}\right).
\end{equation}
Notably, we have the following inequality:
\begin{equation*}
  \lambda_5\leq \eta_5.
\end{equation*}
%where $$\eta_5=\mu_1+\mu_2+\mu_5+\mu_{10}-\mu_{16}\left(1-\sqrt{\frac{\mu_7+\mu_{12}+\mu_{15}+\mu_{16}}{\mu_1+\mu_2+\mu_5+\mu_{10}}}\right).$$
Subsequently, we construct the following $5$-Schmidt witness:
\begin{equation*}
  \mathcal{SW}_{5}=\eta_5 \mathbb{I}-X.
\end{equation*}
\end{corollary}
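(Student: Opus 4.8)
The plan is to run the argument of Proposition \ref{thmetamn} in the special case $k=4$, now with the twelve explicit matrices $M_1^4,\dots,M_{12}^4$ written out above. Starting from $\lambda_5=\max_{|\Psi^4\rangle\in S_4}\langle\Psi^4|X|\Psi^4\rangle$ in Theorem \ref{BiOSDSNW}, I would expand both $X$ and $|\Psi^4\rangle\langle\Psi^4|$ in operator Schmidt form and apply the rearrangement (von Neumann) trace inequality to reduce the task to maximizing $\sum_j\mu_j\tilde s_{j\downarrow}$ over vector Schmidt coefficients with $s_1\ge\dots\ge s_4\ge0$ and $\sum_a s_a^2=1$. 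Because the relative order of the products $s_a s_b$ is fixed only up to the twelve cases listed, each case produces a quadratic form $\bar s^{\top}M_i^4\bar s$, leading to the expression $\lambda_5=\max_i\operatorname{maxEig}\big((M_i^4)^{\mathrm{symm}}\big)$ recorded above.

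Next I would bound every summand by Ledermann's inequality \eqref{NMB2}. Each $M_i^4$ is nonnegative, its entries are the sixteen OSC $\mu_1\ge\dots\ge\mu_{16}\ge0$, and its smallest entry is $m=\mu_{16}$, sitting in the $(4,4)$ slot of each matrix. Denoting by $P_i$ and $p_i$ the largest and smallest row sums of $M_i^4$, \eqref{NMB2} gives $\operatorname{maxEig}\big((M_i^4)^{\mathrm{symm}}\big)\le g(P_i,p_i)$ with
\begin{equation*}
  g(P,p)=P-\mu_{16}\Big(1-\sqrt{\tfrac{p}{P}}\Big),\qquad \frac{\partial g}{\partial P}>0,\quad \frac{\partial g}{\partial p}>0,
\end{equation*}
the positivity of the partials being the same computation as in the proof of Proposition \ref{thmetamn}; hence $g$ is increasing in both variables.

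It remains to locate the dominant ordering. Since $\mu_1\ge\dots\ge\mu_{16}$, the first row $(\mu_1,\mu_2,\mu_4,\mu_6)$ of $M_1^4$ has the largest row sum appearing anywhere among the twelve matrices, so $P_i\le P_1=\mu_1+\mu_2+\mu_4+\mu_6$, while its last row $(\mu_7,\mu_{12},\mu_{15},\mu_{16})$ realizes the largest of the minimal row sums, so $p_i\le p_1=\mu_7+\mu_{12}+\mu_{15}+\mu_{16}$. Monotonicity of $g$ then yields $\lambda_5\le\max_i g(P_i,p_i)=g(P_1,p_1)=\eta_5$, and the witness $\mathcal{SW}_5=\eta_5\mathbb{I}-X$ follows.

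The step I expect to be delicate is the inequality $\operatorname{maxEig}\big((M_i^4)^{\mathrm{symm}}\big)\le g(P_i,p_i)$ itself: Ledermann's bound controls the Perron root of the nonnegative matrix $M_i^4$ through its row sums, whereas the quantity of interest is the top eigenvalue of the symmetrized matrix, and for a non-symmetric $M_i^4$ these two numbers can differ. Closing this gap requires using the decreasing order of the OSC to show that the symmetrized row sums are still dominated by $P_1$ (via estimates such as $\mu_3+\mu_5+\mu_7\le\mu_2+\mu_4+\mu_6$) and that symmetrization leaves the minimal entry $\mu_{16}$ unchanged. The accompanying bookkeeping --- verifying that $M_1^4$ indeed attains both $\max_i P_i$ and $\max_i p_i$ over all twelve orderings --- is routine but must be checked case by case.
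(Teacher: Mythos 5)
Your route is exactly the paper's: specialize Proposition \ref{thmetamn} to $k=4$, reduce $\lambda_5$ to the twelve matrices $M_1^4,\dots,M_{12}^4$, bound each one by the Ledermann-type expression $g(P,p)=P-\mu_{16}\bigl(1-\sqrt{p/P}\bigr)$, and use monotonicity of $g$ together with the row-sum bookkeeping $P_i\le P_1$, $p_i\le p_1$ to single out $M_1^4$; that bookkeeping is correct and matches the paper's identification of $M_u^4$. Moreover, the soft spot you flag is real and is silently skipped by the paper: inequality \eqref{NMB2} controls the Perron root $\rho(M_i^4)$ of the non-symmetric matrix, whereas $\lambda_5=\max_i\operatorname{maxEig}\bigl((M_i^4)^{\mathrm{symm}}\bigr)$, and for a nonnegative matrix symmetrization can only increase the top eigenvalue: if $v\ge 0$ is the normalized Perron vector of $M$, then
\begin{equation*}
\operatorname{maxEig}\bigl(M^{\mathrm{symm}}\bigr)\ \ge\ v^{\top}M^{\mathrm{symm}}v\ =\ v^{\top}Mv\ =\ \rho(M),
\end{equation*}
so a bound on $\rho(M_i^4)$ by itself says nothing about $\lambda_5$.

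However, your proposed repair does not close this gap. Apply \eqref{NMB2} to $S_1=(M_1^4)^{\mathrm{symm}}$: indeed $m(S_1)=\mu_{16}$ and the largest row sum of $S_1$ is still at most $P_1$ (your estimate $\mu_3+\mu_5+\mu_7\le\mu_2+\mu_4+\mu_6$), but the quantity under the square root is governed by the row sums of $S_1$, and symmetrization \emph{raises} the bottom row sum: the fourth row of $S_1$ sums to
\begin{equation*}
\tfrac{1}{2}(\mu_6+\mu_7)+\tfrac{1}{2}(\mu_{11}+\mu_{12})+\tfrac{1}{2}(\mu_{14}+\mu_{15})+\mu_{16}\ \ge\ p_1 .
\end{equation*}
Since $g$ is increasing in its second argument, the bound this yields is $g\bigl(P(S_1),p(S_1)\bigr)$, which can strictly exceed $g(P_1,p_1)=\eta_5$: take $\mu_2=\mu_3$, $\mu_4=\mu_5$, $\mu_6=\mu_7$, $\mu_{11}>\mu_{12}$ and $\mu_{16}>0$, so that $P(S_1)=P_1$ while $p(S_1)>p_1$. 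Dominating the symmetrized row sums by $P_1$ and preserving the minimal entry therefore only recovers the Frobenius bound $P_1=\mu_1+\mu_2+\mu_4+\mu_6$, not the refined $\eta_5$. A second problem, which you inherit from the paper without flagging it, is the substitution of $p_i/P_i$ for Ledermann's $\delta=\max_{P_r<P_s}\bigl(P_r/P_s\bigr)$: since $p_i/P_i\le\delta$ and the upper bound $P-m(1-\sqrt{\delta})$ is increasing in $\delta$, this substitution asserts an inequality strictly stronger than \eqref{NMB2} provides. So as written, both your argument and the paper's leave $\lambda_5\le\eta_5$ unproven at these two points; your diagnosis of the symmetrization issue is the right one, but the sketched fix fails, and closing it would require applying a bound for symmetric nonnegative matrices directly to $S_i$ with its own row-sum data (and then re-proving that $M_1^4$ dominates), or a genuinely different inequality.
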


%$Similarly, with reference to inequality \eqref{NMB3}, we can promptly articulate the subsequent conclusion.

\begin{corollary}
Using the same notations as in Corollary \ref{corSN51}.  We define the $\zeta_{5}$ as following:
\begin{equation}\label{eqbound02}
 \zeta_4 =\mu_1+\mu_2+\mu_4+\mu_{6}-\mu_{16}\left(1-\sqrt{\frac{\mu_7+\mu_{12}+\mu_{15}}{\mu_1+\mu_2+\mu_4+\mu_{6}-\mu_{16}}}\right).
\end{equation}
Notably, within this context, we have the following inequality:
\begin{equation*}
  \lambda_5\leq \zeta_4.
\end{equation*}
Subsequently, we construct the following $4$-Schmidt witness:
\begin{equation*}
  \mathcal{SW}_{5}=\zeta_5 \mathbb{I}-X.
\end{equation*}
\end{corollary}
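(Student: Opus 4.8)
The plan is to carry out the $k=4$ instance of the scheme used for Proposition \ref{thmzetamn}, now applied to the twelve explicit matrices $M_1^4,\dots,M_{12}^4$ attached to the orderings in \eqref{LOfuction}. The reduction preceding \eqref{LOfuction} already expresses the optimal coefficient as the largest of the twelve constrained quadratic maxima, $\lambda_5=\max_{1\le i\le 12}\operatorname{maxEig}(M_i^{\mathrm{symm}})$, so it suffices to produce a uniform upper bound for these quantities and to show the largest such bound equals $\zeta_5$. As in the proof of Proposition \ref{thmzetamn}, I would control each $\operatorname{maxEig}(M_i^{\mathrm{symm}})$ by the Perron-root estimate of Ostrowski's inequality \eqref{NMB3} for the nonnegative matrix $M_i^4$ (its entries are operator Schmidt coefficients, hence nonnegative and, by Lemma \ref{lem3.1mu}, at most one): writing $P(M_i^4)$, $p(M_i^4)$ for the maximal and minimal row sums and $m(M_i^4)$ for the smallest entry,
$$\rho(M_i^4)\le P(M_i^4)-m(M_i^4)\left(1-\sqrt{\frac{p(M_i^4)-m(M_i^4)}{P(M_i^4)-m(M_i^4)}}\right).$$
Since the $\mu_j$ are decreasingly ordered and $\mu_{16}$ sits in the lower-right corner of every arrangement, the smallest entry is $m(M_i^4)=\mu_{16}$ for all $i$.

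Next I would introduce the fixed-$\mu_{16}$ function
$$g(P,p)=P-\mu_{16}\left(1-\sqrt{\frac{p-\mu_{16}}{P-\mu_{16}}}\right)$$
and verify monotonicity through
$$\frac{\partial g}{\partial p}=\frac{\mu_{16}}{2(P-\mu_{16})}\sqrt{\frac{P-\mu_{16}}{p-\mu_{16}}}\ge 0,\qquad \frac{\partial g}{\partial P}=1-\frac{\mu_{16}\sqrt{p-\mu_{16}}}{2(P-\mu_{16})^{3/2}}\ge 0,$$
where the second estimate uses $p\le P$ together with $2\mu_{16}\le P$, both immediate from the decreasing order of the coefficients. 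Because $g$ is nondecreasing in each slot, the largest of the twelve Ostrowski bounds is attained at whichever matrix simultaneously maximizes $P$ and $p$. Reading the first rows in \eqref{LOfuction} shows that the canonical arrangement $M_1^4$ (tied only with $M_2^4$) maximizes the maximal row sum, with $P_5=\mu_1+\mu_2+\mu_4+\mu_6$, and reading the last rows shows $M_1^4$ also maximizes the minimal row sum, with $p_5=\mu_7+\mu_{12}+\mu_{15}+\mu_{16}$. Substituting $P=P_5$, $p=p_5$, $m=\mu_{16}$ into $g$ reproduces the stated formula, giving $\lambda_5\le\max_i g\bigl(P(M_i^4),p(M_i^4)\bigr)=g(P_5,p_5)=\zeta_5$.

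The hard part will be the combinatorial claim that one and the same matrix, $M_1^4$, dominates all eleven competitors in both the maximal row sum and the minimal row sum; separate monotonicity of $g$ in $P$ and in $p$ does not close the argument if some arrangement had a larger $p$ while another had a larger $P$. I would dispatch this by comparing the twelve row-sum profiles index by index: every first row opens with $\mu_1,\mu_2$ and its entrywise-maximal completion $\mu_4,\mu_6$ (used precisely by $M_1^4$ and $M_2^4$) forces $P\le\mu_1+\mu_2+\mu_4+\mu_6$ with equality only there, while each last row compared slot-by-slot yields $\mu_7+\mu_{12}+\mu_{15}+\mu_{16}$ as the largest minimal-row sum; one also checks that no interior row of any $M_i^4$ exceeds $P_5$, so $P_5$ is the genuine global maximal row sum, and that any competitor's true minimal row sum is at most its last-row sum, so the monotone bound only tightens for it. A secondary delicate point, inherited from the general framework, is the passage from the symmetric quadratic-form maximum $\operatorname{maxEig}(M_i^{\mathrm{symm}})$ to the row-sum data of the nonnegative $M_i^4$ feeding Ostrowski's estimate, which I would justify exactly as in Proposition \ref{thmzetamn}. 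Finally, since $\zeta_5\ge\lambda_5=\max_{|\Psi^4\rangle\in S_4}\langle\Psi^4|X|\Psi^4\rangle$, we get $\operatorname{Tr}\bigl((\zeta_5\mathbb{I}-X)\sigma\bigr)\ge 0$ for every $\sigma\in S_4$, so $\mathcal{SW}_5=\zeta_5\mathbb{I}-X$ is a valid $5$-Schmidt witness; the degenerate case $\mu_{16}=0$ collapses $\zeta_5$ to $P_5$ and is covered by Proposition \ref{pro01}.
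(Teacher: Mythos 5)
Your proposal is correct and follows essentially the same route as the paper: the paper obtains this corollary by specializing Proposition \ref{thmzetamn} to $k=4$, i.e., applying Ostrowski's inequality \eqref{NMB3} to the twelve nonnegative matrices $M_i^4$, exploiting the monotonicity of the resulting bound $g(P,p)$ in the extreme row sums, and identifying the arrangement $M_1^4$ (with $P_5=\mu_1+\mu_2+\mu_4+\mu_6$ and $p_5=\mu_7+\mu_{12}+\mu_{15}+\mu_{16}$) as the one realizing the largest bound $\zeta_5$. Your explicit entrywise row-sum comparisons across the twelve arrangements, and your flagging of the passage from $\operatorname{maxEig}(M_i^{\mathrm{symm}})$ to row-sum data of $M_i^4$, are details the paper leaves implicit, but they do not alter the argument.
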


From Example 1, if $k=4$, there is $$\rho_4=\frac{1}{2}|\phi^+\rangle \langle \phi^+|+\frac{1}{4}(|23\rangle+|32\rangle)(\langle 23|+\langle 32|),$$ with $|\phi^+\rangle=\frac{1}{\sqrt{3}}(|00\rangle+|11\rangle+|22\rangle+|33\rangle).$

There are following the operator Schmidt coefficients(OSC):
$$\mu_1=\mu_2=0.3750,   \mu_3=\cdots=\mu_{16}=0.1250.$$

Therefore, there are
$$\lambda_{5}=0.6545, \theta_{5}=0.9330, \zeta_{5}=0.9568 , \zeta_{5}=0.9634, P_{5}=1.0000.$$

%\begin{remark}
%When $s_1=s_2=s_3=s_4=\frac{1}{2}$, ie., $\sum\limits_{i=1}^{4}s_i^2=1$, there is
%$$f_{j}\left(s_1, s_2, s_3,s_4\right)=\frac{1}{4}\sum\limits_{i=1}^{16}\mu_i=\frac{1}{4}\sum\limits_{i=1}^{16}\mu_i\leq \mu_1 + \mu_2 + \mu_3 + \mu_4, \quad j=1,\dots,12.$$
%\end{remark}

\end{widetext}

\end{document}